\newcommand{\N}{\mathbb{N}}
\newcommand{\Z}{\mathbb{Z}}
\let\code\textsc
\newcommand{\duration}{\code{Duration}}
\newcommand{\goto}{\code{Goto}}
\newcommand{\state}{\code{State}}
\newcommand{\codechoose}{\code{Choose}}
\newcommand{\finish}{\code{Finish}}
\newcommand{\cancel}{\code{Cancel}}
\newcommand{\erase}{\code{Erase}}
\newcommand{\remove}{\code{Remove}}
\newcommand{\attr}{\text{Attr}^*}
\newcommand{\attrone}{\text{Attr}}
\newcommand{\attrtwo}{\text{Attr}^2}
\newcommand{\restrattr}{\text{RestrAttr}}
\newcommand{\cpre}{\text{Pre}}
\newcommand{\circpl}{{\raisebox{1pt}{\scalebox{0.9}{\ensuremath{\bigcirc}}}}}
\newcommand{\boxpl}{\raisebox{-0.5pt}{\scalebox{1.15}{\ensuremath{\Box}}}}
\let\kw\textit
\newcommand{\takeout}[1]{}
\title{The Complexity of Robot Games on the Integer Line
\thanks{This work was partially supported by LIA Informel, the Indo-French Formal Methods lab.}
}
\author{Arjun Arul
  \email{arjun@cmi.ac.in}
  \institute{Chennai Mathematical Institute, India}
  \and Julien Reichert
  \email{reichert@lsv.ens-cachan.fr}
  \institute{LSV, ENS Cachan, France}
}
\date{\today}
\newtheoremstyle{perso}
{5mm} 
{3pt} 
{\itshape} 
{} 
{\bfseries} 
{:} 
{.5em} 
{} 
\theoremstyle{perso}
\newtheorem{theorem}{Theorem}
\newtheorem{corollary}[theorem]{Corollary}
\newtheorem{proposition}[theorem]{Proposition}
\newtheorem{lemma}[theorem]{Lemma}
\begin{document}
\maketitle

\begin{abstract}
In robot games on $\Z$, two players add integers to a counter. 
Each player has a finite set from which he picks the integer to add, and the objective of the first player 
is to let the counter reach $0$. 
We present an exponential-time algorithm for deciding the winner of a robot game given the initial counter value, 
and prove a matching lower bound. 
\end{abstract}

\section{Introduction}
Robot games \cite{DR13} are played by two players, a reacher and an opponent, by updating a vector of $m$ integer counters. 
Each player controls a finite set of integer vectors in $\Z^m$. 
Plays start with a given initial vector $v_0 \in \Z^m$ of counter values, and proceed in rounds.
In each round, first the opponent and then the reacher 
adds a vector from his set to the counter values.
The reacher wins when, after his turn, the vector of counter values is zero.

We consider the problem of determining the winner of a robot game for dimension $m=1$. 
Towards this, we present an algorithm for solving this problem in EXPTIME and show that the bound is hard. 

Robot games are a particular kind of reachability games. Such games are
played on a graph $(Q,E)$, called an arena, where the set of vertices $Q$ is partitioned
into $Q_1$ and $Q_2$ to designate which player is in turn to move.
Here, a play is a (possibly infinite) sequence of vertices $q_0 q_1 \dots$ starting with a given initial vertex $q_0$.
At any stage $i$, if $q_i \in Q_1$ the reacher chooses a successor $q_{i+1}$ of $q_i$ such that $(q_i,q_{i+1}) \in E$; otherwise 
the opponent chooses the successor.
The objective is given by a subset $Q'$ of $Q$: the reacher wins a play if it visits a vertex in $Q'$.

The winning set in reachability games, i.e., the configurations from which the reacher has a winning strategy,
can be computed by the attractor construction \cite{Tho95}.
However, in robot games, we have infinitely many configurations, so we will need further tools. 
It turns out that here, the winning set is closed under linear combinations. 
For dimension one, this implies that there exists a bound such that the winning set becomes easy to describe from this bound onwards.
The key idea of our algorithm is to perform the attractor construction up to 
such a bound, which we compute using a theorem from \cite{Wil78}.

In view of the simplicity of their description, it may come as a surprise that robot games are EXPTIME-hard.
We prove this by reduction from countdown games \cite{JLS07},
another class of reachability games, with a nonnegative counter that can only decrease.

Robot games belong to the family of reachability games on counter systems. 
Such games are played on a labelled graph $(Q,E)$ where the set of edges is $E \subseteq Q \times \Z^m \times Q$
and there is a vector of $m$ counters. When an edge $(q,v,q')$ is taken, the vector of counters is updated by adding $v$ to it.
In counter reachability games, the objective of the reacher is either a set of vectors or a set of pairs (vertex,vector).
We can view robot games as counter reachability games on an arena with only two vertices.

\takeout{
Many counter reachability games in dimension two are undecidable, e.g., on vector addition systems with states,
which disable the edges that would make a counter become negative,
it is undecidable whether the first player has a winning 
strategy when the objective is $C \times ((\{0\} \times \N) \cup (\N \times \{0\}))$,
where $C$ is a subset of the set of vertices \cite{BJK10}.
Nevertheless, the decision problem associated to a robot game and an initial vector of counter values is open for dimension two.
}

We hope that settling the complexity of one-dimensional robot games 
will help improving the EXPSPACE upper bound that follows from \cite{BJK10},
for deciding the winner of counter reachability games on the integer line.

\section{Definitions}

When we write ``positive'' or ``negative'', we always mean ``strictly positive'' or ``strictly negative''.
We write $-\N$ for the set of nonpositive integers.

A \kw{robot game} \cite{DR13} in dimension one is a pair $(U,V)$, where $U$ and $V$ are finite subsets of $\Z$.
The robot game is played by a \kw{reacher}, who owns the subset $U$, and an \kw{opponent}, who owns the subset $V$.
Given an initial counter value $x_0 \in \Z$, a \kw{play} proceeds in \kw{rounds}.
In a round that starts at the counter value $x \in \Z$, the opponent chooses what we call a \kw{move}
$v \in V$ and updates the counter to $x + v$, then the reacher chooses a move
$u \in U$ and updates the counter to $x + v + u$, in which the round ends.
The play ends and the reacher wins it if the round ends at $0$, else a new round is played.
By convention, the reacher wins immediately when a play starts at $0$.

To represent robot games, we draw their two vertices, $\circpl$ for the reacher and $\boxpl$ for the opponent,
and two edges that list the set of each player.

\begin{figure}[h]
\begin{center}
\begin{tikzpicture}[->,>=stealth',shorten >=1pt,auto,node distance=3cm,
                    semithick]
  \tikzstyle{state}=[circle,minimum size=8mm,fill=white,draw=black,text=black]
  \tikzstyle{oppstate}=[minimum size=8mm,fill=white,draw=black,text=black]

  \node[oppstate] (A)                    {};
  \node[state]         (B) [right of=A] {};

  \path (A) edge [bend left] node {$-1,3$} (B)
        (B) edge [bend left] node {$-1,0,4$} (A);

\end{tikzpicture}
\caption{Example of a robot game for the sets $U = \{-1,0,4\}$ and $V = \{-1,3\}$.}
\end{center}
\end{figure}

Formally, a play is a finite or infinite sequence $\Z(VU)^*$ or $\Z(VU)^\omega$.
A play prefix in a robot game is a word $\pi \in \Z(VU)^* \cup \Z(VU)^*V$,
the first letter of this word is the initial counter value and the other ones are the moves players do in the play prefix.
We associate to a play prefix $c_0 v_0 u_0 \dots v_h$ its destination $c_0 + v_0 + u_0 + \dots + v_h$.

A \kw{strategy} for the reacher (resp. for the opponent) is a function $\sigma : \Z(VU)^*V \to U$
(resp. $\sigma : \Z(VU)^* \to V$).
A strategy $\sigma$ is \kw{memoryless} if all play prefixes with the same destination
have the same image under $\sigma$.
We then take the destination of a play prefix $\pi$ instead of $\pi$ itself as argument of a memoryless strategy,
which we define from now on as a function $\Z \to U$ or $\Z \to V$ depending on the player.

A counter value $x$ is \kw{winning} if there exists a reacher strategy,
such that for all strategies of the opponent,
the reacher wins the play that starts at $x$ and in which each player moves according to his strategy.
We switch reacher and opponent in the last sentence to define the notion of a \kw{losing} counter value.
The decision problem associated to a robot game $(U,V)$ and an initial counter value $x \in \Z$
asks whether $x$ is winning.

By the Gale-Stewart theorem \cite{GS53}, robot games are determined:
In any robot game, every initial counter value is either winning or losing.
Robot games are even positionally determined, because they are reachability games,
which means that if a player has a winning strategy, then he also has a memoryless winning strategy.

A \kw{linear set} in $\Z$ is a set of the form $\{x + \sum_{i=1}^n k_i x_i\ |\ k_1,\dots,k_n \in \N\}$, for some integers $x,x_1,\dots,x_n$.
In other words, it is the least set that contains $x$ and is closed under addition of integers in $\{x_1,\dots,x_n\}$.
We denote that set by $x\ + \langle\{x_1,\dots,x_n\}\rangle_\N$ or simply $x + x_1\N$ when $n = 1$.
We also write $\langle Y\rangle_\N$ rather than $0 + \langle Y\rangle_\N$.
We say that an integer is $Y$\kw{-reachable} if, and only if, it belongs to $\langle Y\rangle_\N$.

The \kw{amplitude} of a robot game $(U,V)$ is the integer interval
bounded by the extremal combinations of moves in a round.
We denote it by Ampl$(U,V) = \llbracket\min(V)+\min(U),\max(V)+\max(U)\rrbracket$.
We also define for any $k \in \N$ the integer interval
Ampl$^k(U,V) = \llbracket \min(V)+\min(U)-k,\max(V)+\max(U)+k\rrbracket$.

We now give some basic properties of robot games.
Let us first remark that robot games are invariant under translation: Whenever a player can make a move from $x$ to $x'$,
the same move leads from $y$ to $y - x + x'$.

\label{prop1}\begin{proposition}
If two counter values are winning in a robot game, then their sum is also winning.
\end{proposition}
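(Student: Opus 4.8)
The plan is to build a winning strategy for the sum by concatenating two given winning strategies. Suppose $x$ and $y$ are both winning; we want $x+y$ to be winning. If $x=0$ or $y=0$ the claim is immediate (one value is trivially winning and the sum equals the other), so assume both are nonzero, purely for notational convenience. Since robot games are positionally determined, fix memoryless winning reacher strategies $\sigma_x$ (winning from $x$) and $\sigma_y$ (winning from $y$).

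The key idea is to let the reacher, starting from $x+y$, first play a ``shadow'' of $\sigma_x$ in which every counter value is shifted down by $y$: the reacher behaves as if the current value were $x$ rather than $x+y$, feeds the opponent's actual moves to $\sigma_x$, and plays whatever $\sigma_x$ prescribes on the shadow value (i.e.\ at the real counter minus $y$). Because the shift by $y$ is constant and the round structure (opponent move, then reacher move) is untouched, after each round the real counter equals the shadow counter plus $y$. As $\sigma_x$ is winning from $x$, against an arbitrary opponent the shadow counter reaches $0$ after some reacher move; at that instant the real counter is exactly $y$. From the next round on, the reacher abandons the shadow and plays $\sigma_y$ directly on the real counter, which is winning from $y$, so the real counter is eventually driven to $0$ after a reacher move. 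Hence the concatenated strategy wins from $x+y$ against every opponent strategy, so $x+y$ is winning. (Should the real counter happen to reach $0$ during the first phase, the reacher has already won — this only helps.)

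Concretely the steps are: (i) dispose of the cases $x=0$, $y=0$ and invoke positional determinacy to obtain memoryless $\sigma_x,\sigma_y$; (ii) define the concatenated strategy, carrying one bit of memory for the current phase, with the invariant ``real counter $=$ shadow counter $+\,y$'' maintained throughout phase one; (iii) argue that phase one terminates with real counter $y$ because $\sigma_x$ is winning from $x$, and that phase two then forces $0$ because $\sigma_y$ is winning from $y$; (iv) conclude, noting that only the \emph{existence} of a winning strategy from $x+y$ is needed, not a memoryless one.

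The one point that needs care is the bookkeeping at the phase transition: one must check that the shadow counter can only become $0$ at the end of a round — after the reacher's move, which is precisely when $\sigma_x$ declares a win — so that the switch to $\sigma_y$ takes place cleanly at the start of a fresh round with the real counter equal to $y$. Everything else (translation invariance of moves, totality of the memoryless strategies, and the alignment of round boundaries under the shift) is routine.
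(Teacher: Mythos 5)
Your proof is correct and follows essentially the same route as the paper's: use translation invariance to run $\sigma_x$ shifted by $y$ (i.e.\ $z \mapsto \sigma_x(z-y)$) until the counter first reaches $y$ after a reacher move, then switch to $\sigma_y$, noting that the reacher can detect this switch point from the play prefix. Your extra care about the phase transition occurring at a round boundary matches the paper's closing remark that the reacher always knows whether $y$ has been visited.
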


\begin{proof}
Let $x \in \Z$ and $y \in \Z$ be two winning counter values.
Let $\sigma_x$ and $\sigma_y$ be winning strategies of the reacher from $x$ and $y$.
Because the game is invariant by translation, the reacher can enforce a play
that starts at $x+y$ to visit $y$ after one of his turns with the strategy $z \mapsto \sigma_x(z-y)$.
After this first visit to $y$, the reacher wins by using $\sigma_y$.
He always knows whether $y$ was visited during a play prefix $(x+y) v_0 u_0 \dots v_h$,
a necessary and sufficient condition is that a partial sum $(x+y)+v_0+u_0+\dots+v_i$
is $y$.
\end{proof}

As a consequence, if all counter values in a set $X \subseteq \Z$ are winning in a robot game,
then every $X$-reachable counter value is winning.
This guarantees that the winning set is linear.

The next proposition states what happens when a player can force the counter value to increase or decrease unboundedly.

\label{prop2}\begin{proposition}
Let $(U,V)$ be a robot game.
\begin{itemize}
\item If $\max(V) \ge -\min(U)$, then each positive counter value is losing.
Similarly, if $\min(V) \le -\max(U)$, then each negative counter value is losing.
\item If $\max(U) > -\min(V)$, and if there exists a bound above which each counter value is winning,
then each counter value is winning.
The same holds if $\min(U) < -\max(V)$, and if there exists a bound below which each counter value is winning.
\end{itemize}
\end{proposition}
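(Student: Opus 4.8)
The plan is to prove each of the four statements by exhibiting an explicit memoryless strategy for the relevant player and arguing that it wins. For the first bullet, suppose $\max(V) \ge -\min(U)$, i.e.\ $\max(V) + \min(U) \ge 0$, and let $x_0 > 0$. The opponent's strategy is to always play $\max(V)$. After the opponent's move from any counter value $x$, the counter is $x + \max(V)$, and the reacher can only bring it down by at most $-\min(U)$, so it ends the round at a value $\ge x + \max(V) + \min(U) \ge x > 0$. Hence the counter stays positive forever and never hits $0$; thus $x_0$ is losing. The symmetric claim ($\min(V) \le -\max(U)$, each negative value losing) follows by the mirror argument with the opponent always playing $\min(V)$, or by applying the first part to the ``reflected'' robot game $(-U,-V)$.

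For the second bullet, suppose $\max(U) > -\min(V)$, i.e.\ $\max(U) + \min(V) \ge 1$, and suppose there is a bound $B$ such that every counter value $\ge B$ is winning. I claim every counter value is winning. Fix any $x_0$; if $x_0 \ge B$ or $x_0 = 0$ we are done, so assume $x_0 < B$. The reacher's plan is first to drive the counter above $B$, then switch to a winning strategy there. Concretely, as long as the current round starts at a value $x < B$, the reacher responds to the opponent's move $v$ by playing $\max(U)$; the round then ends at $x + v + \max(U) \ge x + \min(V) + \max(U) \ge x + 1$. So each such round strictly increases the (round-start) counter value by at least $1$, and after at most $B - x_0$ rounds the counter reaches a value $\ge B$, which is winning; from there the reacher uses the assumed winning strategy. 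Since the opponent cannot prevent the increase, $x_0$ is winning. The symmetric statement ($\min(U) < -\max(V)$, bound from below) is again the mirror image, via the reflected game $(-U,-V)$.

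The arguments are essentially routine once the right strategies are named; the only mild subtlety is bookkeeping the distinction between the value at the start of a round and the value after the opponent's move, and checking that the monotone drift is measured between successive round-starts (the intermediate value after the opponent's move may dip, but the end-of-round value is controlled). I expect the main ``obstacle'' — really just a point to state cleanly — is that in the second bullet the reacher must detect when the counter has first reached $\ge B$ in order to switch strategies; as in Proposition~\ref{prop1}, this is decidable from the play prefix since the reacher sees all partial sums, so a memoryless strategy suffices (play $\max(U)$ below $B$, play the winning move at or above $B$). No reachability or number-theoretic input is needed here; this proposition is purely about unbounded drift.
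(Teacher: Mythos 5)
Your proof is correct and follows essentially the same route as the paper's: for the first bullet the opponent forces a non-decreasing end-of-round value by always playing $\max(V)$, and for the second the reacher plays $\max(U)$ to drive the counter strictly upward each round until it exceeds the bound, then switches to the assumed winning strategy. The only small imprecision is your closing remark that the combined reacher strategy is memoryless --- detecting the first visited value $y$ above the bound and then following $\sigma_y$ genuinely uses the play prefix, exactly the bookkeeping the paper invokes from Proposition~1 --- but this side comment does not affect the validity of the argument.
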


\begin{proof}
\begin{itemize}
\item We consider a robot game $(U,V)$ in which we have $\max(V) \ge - \min(U)$.
For any positive counter value $x$ and all moves $v_1, \dots, v_k \in V$, $u_1, \dots, u_k \in U$,
the opponent wins by playing the strategy $x v_1 u_1 \dots v_k u_k \mapsto \max(V)$:
every round ends in a counter value that is greater than or equal to the previous one, no matter what the reacher does.
The case where $\min(V) \le - \max(U)$ is analogous for negative counter values.
\item (First case only, the second one is analogous)
We consider a robot game $(U,V)$ for which we have $\max(U) > - \min(V)$,
and for any counter value $y$ above a certain $x \in \Z$,
the reacher has a winning strategy $\sigma_y$.
Here is the winning strategy for the reacher from any initial counter value:
In a play prefix where no counter value above $x$ has been visited, he plays $\max(U)$;
in a play prefix ending at $z$ where the first counter value above $x$ visited is $y$, he plays $\sigma_y(z)$.
Because $\max(U) > - \min(V)$, after every round the counter value visited grows
until it goes over $x$ where the reacher will win afterwards. Like in the proof of Proposition~$1$,
the reacher knows whether the first case or the second one is the right one and what the value of $y$ is.
\end{itemize}
\end{proof}

\section{The complexity of one-player robot games on the integer line}

When there is only one player in a robot game, i.e., $V = \{0\}$, the order of the moves does not matter.
The reacher has a winning strategy if, and only if, for each move $u \in U$,
there exists a number of times the reacher must use $u$, that is to say,
the negative of the initial counter value is a positive linear combination of moves in the set $U$.
We thus make a link between one-player robot games and linear programming.

\label{thm1}\begin{theorem}
Given a robot game in dimension one with $V = \{0\}$ and an initial counter value $x_0$,
deciding whether the reacher has a winning strategy from $x_0$ is NP-complete.
\end{theorem}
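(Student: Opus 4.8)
The plan is to prove membership in NP and NP-hardness separately. For the upper bound, observe that by the discussion preceding the theorem, when $V = \{0\}$ the reacher wins from $x_0$ if and only if $-x_0 \in \langle U \rangle_\N$, i.e., there exist nonnegative integers $(k_u)_{u \in U}$ with $\sum_{u \in U} k_u \cdot u = -x_0$. This is exactly feasibility of an integer linear program (a single linear Diophantine equation with nonnegativity constraints). The standard approach is to invoke a bound on the size of minimal solutions: if such a solution exists, then one exists in which each $k_u$ is bounded by a value singly-exponential in the bit-size of the input (this follows from results on small solutions of integer linear programs, e.g. the classical Borosh--Treybig / von zur Gathen--Sieveking bounds, or one can cite \cite{Wil78} as the paper does elsewhere). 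Hence a certificate consists of the vector $(k_u)_{u \in U}$ written in binary, which has polynomial length, and checking $\sum k_u u = -x_0$ is polynomial. Therefore the problem is in NP.

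For NP-hardness, I would reduce from a standard NP-complete problem that naturally produces a single Diophantine-equation feasibility question. The cleanest choice is probably \textsc{Subset-Sum} (or equivalently \textsc{Partition}): given positive integers $a_1, \dots, a_n$ and a target $t$, deciding whether some subset sums to $t$ is NP-complete. The subtlety is that a robot game lets the reacher use each move $u$ \emph{any} nonnegative number of times, not just zero or one times, so a naive encoding $U = \{a_1, \dots, a_n, \text{something}\}$ with $x_0 = -t$ does not force a genuine subset choice. I would fix this in the usual way by padding: work in a sufficiently large base (radix) $B > n$, and replace each $a_i$ by $a_i' = B^i + a_i$ and the target by $t' = \sum_{i=1}^n B^i \cdot [\,\cdot\,]$... more precisely, set $a_i' = a_i + B^{i}$, choose $B$ larger than $\sum_i a_i$, and ask whether $-x_0 = t + \sum_{i \in S} B^i$ for the chosen subset $S$ — but since we do not know $S$ in advance, instead use the standard trick with $n$ separate ``digit'' blocks where each digit must be filled exactly once. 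Concretely: introduce for each $i$ a pair of moves contributing to digit $i$, one carrying $a_i$ in the low-order block and one carrying $0$, each tagged with $B^i$ in its own high-order block, and set the target's high-order blocks all to $1$; the base condition $B > n \cdot (\text{max value})$ ensures no carries between blocks, so any solution uses exactly one move per digit block, encoding a subset. This yields: reacher wins iff the \textsc{Subset-Sum} instance is a yes-instance.

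The step I expect to be the main obstacle is the NP-hardness reduction, specifically getting the padding/base argument completely correct so that unbounded repetition of moves cannot create spurious solutions via carries or via overusing a single move. One must verify: (i) the chosen base $B$ is large enough that in any representation of the (positive) target as a nonnegative combination of the padded moves, the per-block sums cannot ``overflow'' into adjacent blocks, which forces each block to be realized by exactly one unit of exactly one of its two associated moves; and (ii) all numbers involved have polynomially many bits, which holds since $B$ and hence $B^n$ are singly-exponential and thus have polynomial bit-length. A clean way to organize (i) is to first argue that in any solution the total ``weight'' placed in block $i$ equals $1$ by looking at the equations block by block from the observation that the coefficient sum in block $i$ of the target is $1$ and all contributions are nonnegative multiples of $1$ confined to that block (no interaction because $B$ exceeds the maximum achievable block sum), then read off the subset from which of the two moves was chosen in each block. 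For the NP upper bound, the only care needed is to cite a correct small-solution bound for linear Diophantine systems and note that the input to the ILP (the numbers $u \in U$ and $x_0$, given in binary) has size polynomial in the input of the robot-game problem.
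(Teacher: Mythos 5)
Your proposal is correct and follows essentially the same route as the paper: membership in NP via the integer-linear-programming / small-solution-certificate argument for the single Diophantine equation $\sum_{u\in U} k_u u = -x_0$, and hardness via a reduction from \textsc{Subset-Sum} using a positional (radix) encoding with one tag digit per element that must be cleared exactly once, so that each $U_i$-style pair encodes the binary choice ``take $a_i$ or not.'' The only cosmetic differences are that the paper works in base $2$ and prevents overuse of moves with an extra block worth $n\cdot 2^{k+n}$ that caps the total number of moves at $n$, whereas you use a large base $B$ and a block-by-block no-carry argument to force each block's multiplicity to be exactly $1$; both devices serve the same purpose.
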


\begin{proof}
\ \\
\begin{itemize}
\item\kw{NP-membership}: The decision of the winner in a one-player robot game reduces to
the following integer linear programming problem, which is in NP according to \cite[p. $320$, Th.~$13.4$]{PS82}.
 \begin{alignat*}{4}
\text{Minimize~} x \quad &   \\
\text{subject to} \hspace*{1.6em}  &  ~x + \sum_{u \in U} a_u u & {}={} &  -x_0\\
&  \hspace*{4em} x & {}\ge{} &  0\\
& \hspace*{4em} a_u & {}\ge{} & 0,~ \hspace*{2em} u \in U
\end{alignat*}
The minimal $x$ is $0$ if, and only if, the reacher has a winning strategy.

\vspace*{5mm}

\item\kw{NP-hardness}: We present a polynomial-time reduction from the NP-complete \textsc{Subset-Sum} problem~\cite[p. 1097]{CLRS09}
to the decision of the winner in a one-player robot game.
For a given set $\{x_0,\dots,x_{n-1}\}$ of positive integers and a given positive integer $s$, the \textsc{Subset-Sum} problem
asks whether there exists a subset $I$ of $\llbracket 0,n-1\rrbracket$ such that $\sum_{i \in I} x_i = s$.
Let $(X,s)$ be an instance of \textsc{Subset-Sum}. Let $n = |X|$, $b = \max(X)$, and
$k = \lfloor\log_2(\max(nb,s))\rfloor+1$.
We build a robot game where we write counter values as their binary encoding, so we deal with bits in the following.

The basic idea of the reduction is that we start from $s$
and give to the reacher the possibility to subtract some $x_i$ and try to reach $0$, but it is not enough.
To prevent him from subtracting twice the same $x_i$, we add a control on the highest bits of the counter value.
Therefore, the initial counter value in the robot game is $s + \sum_{i=k}^{k+n-1} 2^i + n \cdot 2^{k+n}$,
and every reacher move subtracts at least $2^{k+n}$ from the counter value in order that at most $n$ moves can be performed.
More precisely, $U = \cup_{i=0}^{n-1} U_i$, where $U_i = \{-x_i - 2^{k+i} - 2^{k+n},-2^{k+i}-2^{k+n}\}$.

We now explain the link between a potential solution to the instance of \textsc{Subset-Sum} and a strategy in the robot game.
Consider a subset $I$ of $\llbracket 0,n-1\rrbracket$.
With at most $n$ moves from the set $U$, the only possibility to reset the bits number $k$ through $k+n-1$ in the robot game
is to decrement all of them once, i.e., use exactly one move from each set $U_i$.
We choose in the set $U_i$ the move $-x_i - 2^{k+i} - 2^{k+n}$ if $i$ is in $I$ and $-2^{k+i}-2^{k+n}$ else.
After the $n$ moves, the counter value is $0$ if, and only if, $\sum_{i \in I} x_i = s$.
Consequently, it is equivalent to find a winning strategy in the robot game and to find a subset of $X$ that sums up to $s$.
\end{itemize}
\end{proof}

\section{The complexity of two-player robot games on the integer line}

In this section, we present the tools to build an exponential-time algorithm that decides the winner in a robot game.
First, we explain the notion of an attractor, then we define the Frobenius problem and we give an over-approximation
of the solution to this problem, in order to find bounds above and below which we are sure that the same player always wins.
The algorithm in the third part computes the attractor and uses the bounds we get to avoid infinite recursion.
At the end of the section, we prove the matching lower complexity bound for the decision problem.

\subsection{The attractor construction}

We first define the one-step attractor of a set.
Consider a graph $(Q,Q_\exists,Q_\forall,E)$ for a general reachability game,
where~$Q$ is a possibly infinite set of vertices partitioned into
subsets~$Q_\exists$ for the reacher and~$Q_\forall$ for the opponent,
and~$E \subseteq Q \times Q$.
The \kw{one-step attractor} of a subset~$X$ of~$Q$, written~$\attrone(X)$, is the set of states
from which the reacher can force to go to~$X$ in one step, which means:
\begin{align*}
\attrone(X) = &\ \{q \in Q_\exists \text{ such that } \exists q' \in X, (q,q') \in E\}\\
& \cup \{q \in Q_\forall \text{ such that } \forall q' \in Q, (q,q') \in E \text{ implies } q' \in X\}.
\end{align*}

The \kw{attractor} of $X$, written $\attr(X)$, is the set of states
from which the reacher has a strategy to eventually go to $X$ no matter what the opponent plays,
in other words he has a winning strategy in the reachability game with objective $X$ on the aforementioned arena.
The set $\attr(X)$ is the least fixpoint of $\attrone$ containing $X$.
We obtain it recursively: compute $Y = X\ \cup\ \attrone(X)$, if $Y = X$ then return $Y$ else set $X := Y$ and repeat.

Let us adapt a robot game to these notations.
The reacher owns $Q_\exists := \{\circpl\} \times \Z$ and the opponent owns $Q_\forall := \{\boxpl\} \times \Z$.
The set of edges is the union of the set $\{((\boxpl,x),(\circpl,y))\ |\ x, y \in \Z, y-x \in V\}$,
which represents the opponent moves, and of the set $\{((\circpl,x),(\boxpl,y))\ |\ x, y \in \Z, y-x \in U\}$,
which represents the reacher moves. The objective for the reacher is the vertex $(\boxpl,0)$.
In our definition of robot games, winning positions are counter values.
They are here represented as a pair ($\boxpl$ or $\circpl$ , the counter value),
but we only care for winning positions with $\boxpl$ as left component when we solve the game,
because a play starts with the opponent.

We use here two-step attractors $\attrtwo(X) = \attrone(\attrone(X))$, rather than one-step attractors,
because of the round-based structure of a play in the robot game.
The winning set in a robot game is $\attr(\{(\boxpl,0)\})$.
We call it trivial if its intersection with the opponent vertices is restricted to $\{(\boxpl,0)\}$,
which is the case if, and only if, the computation of $\attr(\{(\boxpl,0)\})$ stops at the second step
because a fixpoint has already been reached.
In other words, the winning set in a robot game is trivial if, and only if,
the set $\attrtwo(\{(\boxpl,0)\})$ is either empty or $\{(\boxpl,0)\}$.

\label{prop3}\begin{proposition}
The winning set in a robot game $(U,V)$ is non-trivial if, and only if,
there exists a counter value $x \not= 0$ such that, for all opponent moves $v \in V$,
there is a reacher move $u \in U$ such that $u + v = -x$.
\end{proposition}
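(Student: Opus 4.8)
The plan is to prove the equivalence by computing the set $\attrtwo(\{(\boxpl,0)\})$ directly from its definition and reading off the stated condition. Recall from the discussion just above the proposition that the winning set is non-trivial precisely when $\attrtwo(\{(\boxpl,0)\})$ is neither empty nor equal to $\{(\boxpl,0)\}$. Every element of $\attrtwo(\{(\boxpl,0)\})$ will turn out to be an opponent vertex $(\boxpl,x)$, so non-triviality is equivalent to the existence of some $(\boxpl,x) \in \attrtwo(\{(\boxpl,0)\})$ with $x \neq 0$. It therefore suffices to characterise, for $x \in \Z$, when $(\boxpl,x)$ belongs to $\attrtwo(\{(\boxpl,0)\})$.

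The first step is to compute the inner one-step attractor $A := \attrone(\{(\boxpl,0)\})$. A reacher vertex $(\circpl,x)$ lies in $A$ exactly when it has an edge to $(\boxpl,0)$, that is, when $-x \in U$; an opponent vertex has only reacher vertices as successors, none of which equals $(\boxpl,0)$, so (since $V \neq \emptyset$, every opponent vertex does have successors) no opponent vertex lies in $A$. Hence $A = \{(\circpl,x) \mid -x \in U\}$.

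The second step is to compute $\attrone(A)$. A reacher vertex has only opponent vertices as successors, none of which lies in $A$, so no reacher vertex belongs to $\attrone(A)$. An opponent vertex $(\boxpl,x)$ belongs to $\attrone(A)$ exactly when all of its successors do, i.e.\ exactly when $(\circpl,x+v) \in A$ for every $v \in V$, i.e.\ exactly when $-(x+v) \in U$ for every $v \in V$. Writing $u := -(x+v)$, the latter says: for every $v \in V$ there is $u \in U$ with $u+v = -x$. Together with the reduction in the first paragraph, this is exactly the claimed equivalence.

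I do not anticipate a real difficulty: the argument is essentially bookkeeping. The only things that need care are keeping the two vertex copies $\circpl$ and $\boxpl$ straight while tracing edges, and checking that the degenerate case $U = \emptyset$ (and, if one allows it, $V = \emptyset$) is consistent with the statement.
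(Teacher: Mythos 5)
Your proof is correct and follows the same route as the paper: unrolling the definition of $\attrtwo(\{(\boxpl,0)\})$ layer by layer and observing that only opponent vertices $(\boxpl,x)$ with the stated property can appear in it. The paper compresses this into a single sentence, whereas you spell out the intermediate set $\attrone(\{(\boxpl,0)\})$ and the edge cases explicitly, but the content is identical.
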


\begin{proof}
Given a counter value $x \not= 0$, a configuration $(\boxpl,x)$ is in $\attrtwo(\{(\boxpl,0)\})$
if, and only if, for any opponent move $v \in V$, we have $(\circpl,x+v) \in \attrone(\{(\boxpl,0)\}$,
and this is equivalent to the existence of a $u \in U$, which depends on $v$ and $x$, such that $x+v+u = 0$.
\end{proof}

Let us look at the game presented in the Figure~$1$. Here, consider a play that starts at $-3$:
if the opponent chooses to play $3$, then the reacher wins by playing $0$, if the opponent plays $-1$,
then the reacher wins by playing $4$.
With the terminology of this section, it means that $(\boxpl,-3) \in \attrtwo(\{(\boxpl,0)\})$.

We define an integer version of $\attrtwo$, for a subset $X$ of $\Z$, by
$$\cpre(X) = \{x \in \Z\ |\ (\forall v \in V)(\exists u \in U)\ x + u + v \in X\}.$$

Note that a round begins in $\cpre(X)$ if, and only if, the reacher can force this round to end in $X$.
Let $X$ be a subset of $\Z$, and let $X_{\text{opp}} = \{\boxpl\} \times X$.
Because the predecessor of a vertex with $\boxpl$ in the left component
can only be a vertex with $\circpl$ in the left component and vice-versa,
we have $\attrtwo(X_{\text{opp}}) = \{\boxpl\} \times \cpre(X)$.

The next result is very important for our algorithm.
We will usually be in a situation where the algorithm computes a bound $b$ such that
we can decide immediately for which player a counter value $x$ that is greater in absolute value than $b$ is winning.
The proposition presents the bounded arena we build from the robot game, where termination is guaranteed for the computation of the attractor.

\label{prop4}\begin{proposition}
Consider a robot game $G$ for which there exist two integers $d \in \N \setminus \{0\}$ and $b \in \N$ such that
no negative counter value is winning and every counter value greater than $b$
is winning if, and only if, it is a multiple of $d$.
We can build a reachability game on a finite arena on which the reacher has a winning strategy
if, and only if, he has a winning strategy in $G$. 
\end{proposition}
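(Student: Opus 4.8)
The plan is to turn $G$ into a finite reachability game by tracking the counter value explicitly only while it stays inside the ``ambiguous'' window $\llbracket 1,b\rrbracket$ of round-start values, and by short-circuiting the play to one of two absorbing vertices --- a target $\top$ and a non-target $\bot$ --- as soon as a round ends at a counter value whose status in $G$ is already pinned down by the hypotheses (negative values, which are losing, and values above $b$, which are winning exactly when divisible by $d$).

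Concretely, I would take as vertices: an opponent vertex for each $x \in \llbracket 1,b\rrbracket$; a reacher vertex for each $y$ in the interval $\llbracket 1+\min(V),\,b+\max(V)\rrbracket$ of mid-round values reachable after the opponent's move; and the two sinks $\top,\bot$ (each with a self-loop, $\top$ being the unique target). From an opponent vertex $x$ the opponent plays some $v\in V$ and moves to the reacher vertex $x+v$ (which lies in the stated interval); from a reacher vertex $y$ the reacher plays some $u\in U$, and the round-end value $z=y+u$ is routed to $\top$ if $z=0$ or if $z>b$ with $d\mid z$, to $\bot$ if $z<0$ or if $z>b$ with $d\nmid z$, and to the opponent vertex $z$ if $z\in\llbracket 1,b\rrbracket$. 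The initial vertex is the opponent vertex $x_0$ when $x_0\in\llbracket 1,b\rrbracket$, and is $\top$ or $\bot$ according to the then immediate status of $x_0$ otherwise. This arena is finite, with $O(b+|\min(V)|+|\max(V)|)$ vertices, and its edges honour the hypotheses on $G$: a round routed to $\top$ ends at a winning counter value, a round routed to $\bot$ ends at a losing one, and since $\top,\bot$ are absorbing this commitment is faithful.

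It then remains to show that the reacher wins the finite game if and only if $x_0$ is winning in $G$, which I would do by translating memoryless winning strategies back and forth, using positional determinacy of both games. For the ``if'' direction, fix a memoryless winning strategy $\sigma$ of the reacher in $G$ and, in the finite arena, have the reacher play $\sigma(y)$ at a reacher vertex $y$; along any consistent play every round ends at a winning counter value of $G$ --- because the restriction of a memoryless winning strategy to a reachable counter value is again winning, so round-end values under $\sigma$ are winning, hence never negative and, when above $b$, multiples of $d$ --- so the play never enters $\bot$, and if it also never entered $\top$ it would project to an infinite $\sigma$-consistent play of $G$ that avoids $0$, a contradiction; hence the translated strategy is winning. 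For the ``only if'' direction, fix a memoryless winning strategy $\sigma'$ in the finite arena, play it in $G$ as long as round-start values stay in $\llbracket 1,b\rrbracket$, and switch to a winning strategy of $G$ from $z$ as soon as a round ends at some $z>b$ with $d\mid z$ (such a strategy exists since $z$ is winning in $G$); since $\sigma'$ is winning no round ends at a losing value of $G$, and since $\sigma'$ wins on a finite arena the ``mimicking'' phase lasts only finitely many rounds, after which the play has either reached $0$ or is run by a genuine $G$-winning strategy --- in both cases the strategy is winning from $x_0$.

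The main difficulty is not the equivalence but getting the arena right: one has to notice that the vertices cannot be confined to round-start values in $\llbracket 1,b\rrbracket$, since the reacher must still be given a real choice at the mid-round values $x+v$, which can fall outside that window; and one has to check that every way a round can leave the window --- landing negative, landing on $0$, or overshooting $b$ --- is correctly and completely classified by the two hypotheses on $G$.
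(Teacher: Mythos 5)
Your construction is essentially the paper's own arena $\mathrm{Restr}^b_d(U,V)$: the same window of opponent vertices, the same widened interval of mid-round reacher vertices, and the same routing of round-end values to winning/losing sinks according to the two hypotheses (the paper merely keeps two separate losing sinks where you merge them into one). Your proof is correct and in fact more detailed than the paper's, which asserts the equivalence of the two games without carrying out the back-and-forth strategy translation that you spell out.
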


\begin{proof}
Let Restr$^b_d(U,V) = (Q,Q_\exists,Q_\forall,E)$, where:
\begin{itemize}
\item $Q_\forall = \{\bot_{< 0}, \top_{> b}, \bot_{> b}\} \cup (\{\boxpl\} \times \llbracket 0,b\rrbracket)$;
\item $Q_\exists = \{\circpl\} \times \llbracket\min(V),b+\max(V)\rrbracket$;
\item $Q = Q_\exists \cup Q_\forall$;
\item $E = \{((\boxpl,x),(\circpl,y)) \in Q_\forall \times Q_\exists \ |\ y - x \in V\}\\
\cup \{((\circpl,x),(\boxpl,y)) \in Q_\exists \times Q_\forall \ |\ y - x \in U\}\\
\cup \{((\circpl,x),\bot_{< 0}) \in Q_\exists \times Q_\forall\ |\ \exists u \in U, x + u < 0\}\\
\cup \{((\circpl,x),\bot_{> b}) \in Q_\exists \times Q_\forall\ |\ \exists u \in U, x + u > b \wedge x + u \not\in d\Z\}\\
\cup \{((\circpl,x),\top_{> b}) \in Q_\exists \times Q_\forall\ |\ \exists u \in U, x + u > b \wedge x + u \in d\Z \}\\
\cup \{(\bot_{< 0},\bot_{< 0}),(\top_{> b},(\boxpl,0)),(\bot_{> b},\bot_{> b})\}$.
\end{itemize}

The reachability game played on Restr$^b_d(U,V)$ where the reacher wants to go from $(\boxpl,x)$ to $(\boxpl,0)$,
for a given $1 \le x \le b$, is actually the robot game $(U,V)$ with the initial value $x$,
in which we stop the play as soon as we know the winner.
Indeed, we supposed that all negative counter values are losing,
that is why, on Restr$^b_d(U,V)$, instead of vertices $(Q_\forall,x)$ for $x \in -\N$
we have a losing sink $\bot_{< 0}$. Similarly, because in $(U,V)$ all counter values above b
are winning if, and only if, they are multiples of $b$, on Restr$^b_d(U,V)$,
instead of vertices $(Q_\forall,x)$ for $x > b$, we have two sinks, one winning and one losing,
and the redirection of edges depend on the counter value.
\end{proof}

We allow the notation Restr$^b_d(U,V)$ for negative integers $b$,
given that no positive counter value is winning
and every counter value less than $b$ is winning if, and only if,
it is a multiple of $d$. In fact Restr$^b_d(U,V)$ is the arena $Restr^{-b}_d(-U,-V)$.
Given $d$ and $b$, we can decide the winner in the reachability game on Restr$^b_d(U,V)$
using the attractor construction, because this time the arena is finite.
We write $\restrattr(G) = \{x \in \Z\ |\ (\boxpl,x) \in \attr(\{(\boxpl,0)\})\}$
where $\attr(\{(\boxpl,0)\})\}$ is the winning set in the game described above on the arena $G =$ Restr$^b_d(U,V)$.
The function $\restrattr$ is used in the main algorithm.

\subsection{The Frobenius problem}

Let $W$ be a non-empty subset of $\Z$. The arithmetical notions we present in this section
are part of the algorithm: $W$ stands for a subset of the winning counter values.
We denote by gcd$(W)$ the greatest common divisor of $W$,
and we compute it as follows: gcd$(\{d\}) = d$, and for $W \not= \emptyset$, gcd$(\{w\} \cup W)$ is the usual greatest common divisor
of $w$ and gcd$(W)$. The integers in $W$ are mutually prime if gcd$(W) = 1$.

The \kw{Frobenius problem} asks for the greatest integer that is not $W$-reachable,
where $W$ is a set of mutually prime positive integers.

Note that the set of non-$W$-reachable positive integers would be infinite without the assumption of mutual primality.
It is empty whenever the set $W$ contains the value $1$, in which case the solution to the Frobenius problem is $-1$, by convention.
Theorem~$6$, which follows from \cite{Wil78}, gives a bound to the solution to the Frobenius problem for a given set. 

\label{thm2}\begin{theorem}
Let $W$ be a set of mutually prime positive integers.
The solution to the Frobenius problem for $W$ is less than or equal to $\max(W)^2$.
\end{theorem}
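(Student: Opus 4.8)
The plan is to reduce the statement to a classical bound on the Frobenius number. The sharpest elementary bound in the literature is that for mutually prime positive integers $a_1 < a_2 < \dots < a_n$ (with $n \ge 2$), the largest non-representable integer is at most $a_{n-1} a_n - a_{n-1} - a_n$, or even just $\le (a_1 - 1)(a_n - 1) - 1$ using only the two integers $a_1$ and $a_n$; Wilf's result from \cite{Wil78} gives a clean bound of this flavour. In either case the right-hand side is strictly less than $a_n \cdot a_n = \max(W)^2$, so once we have any such quadratic-in-$\max(W)$ bound we are done. So the real content is to invoke \cite{Wil78} correctly and then do the one-line arithmetic comparison.

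First I would dispose of the degenerate cases. If $1 \in W$ then every integer is $W$-reachable and the Frobenius number is $-1$ by the convention stated just above the theorem, which is trivially $\le \max(W)^2$. If $W = \{w\}$ is a singleton, then mutual primality forces $w = 1$, already handled. So we may assume $|W| \ge 2$ and $\min(W) \ge 2$, and we order $W$ as $a_1 < \dots < a_n$ with $n \ge 2$.

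Next I would extract from \cite{Wil78} a usable upper bound $F(W) \le g(\max(W))$ where $g$ is a polynomial of degree two with leading coefficient $1$ (e.g. $g(m) = m^2 - m$, or the standard $g(a_1, a_n) = (a_1-1)(a_n-1) - 1 \le (a_n - 1)^2 - 1$, noting $a_1 \le a_n$). The key point is that for $a_n = \max(W) \ge 2$ we have $g(\max(W)) < \max(W)^2$: for instance $(a_n-1)^2 - 1 = a_n^2 - 2a_n < a_n^2$. Hence $F(W) < \max(W)^2$, which in particular gives $F(W) \le \max(W)^2$ as claimed (the slightly weaker non-strict bound is what the algorithm needs, so there is some slack here).

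The main obstacle is purely bibliographic rather than mathematical: one must state precisely which inequality from \cite{Wil78} is being used, since that paper studies the "Wilf conjecture" side of numerical semigroups and the relevant upper bound on the Frobenius number may be phrased in terms of the number of gaps or the multiplicity rather than directly in terms of $\max(W)$. Once the correct statement is pinned down, the remaining argument is the trivial observation that any degree-two polynomial in $\max(W)$ with leading coefficient $1$ and non-positive linear coefficient (after the degenerate cases are removed) is bounded above by $\max(W)^2$. No nontrivial number theory beyond the cited bound is required.
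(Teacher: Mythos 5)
The first thing to note is that the paper does not actually prove this statement: Theorem~6 is imported wholesale from \cite{Wil78} (Wilf's circle-of-lights paper on the money-changing problem), so your overall strategy --- quote a quadratic upper bound on the Frobenius number from the literature and compare it to $\max(W)^2$ --- is in the same spirit as what the paper does. Your handling of the degenerate cases ($1 \in W$, singletons) and the closing arithmetic $(a_n-1)^2 - 1 = a_n^2 - 2a_n < a_n^2$ are fine.

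The genuine gap is in the one concrete, self-contained justification you offer, namely that one gets $F(W) \le (a_1-1)(a_n-1)-1$ ``using only the two integers $a_1$ and $a_n$.'' In this paper ``mutually prime'' means $\gcd(W)=1$ as a set, \emph{not} pairwise coprime, so $\gcd(a_1,a_n)$ may exceed $1$; the paper's own example $W=\{6,10,15\}$ in Section~4.3 is exactly such a set, and the semigroup generated by $6$ and $15$ alone misses infinitely many integers, so Sylvester's formula applied to that pair says nothing. (The inequality $F(W)\le(a_1-1)(a_n-1)-1$ does hold for arbitrary coprime sets --- it is a classical bound going back to Schur --- but it needs its own argument, e.g.\ via the set of minimal representatives of the residue classes modulo $a_1$; it does not reduce to the two-generator case.) This pitfall is precisely why the paper includes Proposition~11, which first manufactures two \emph{coprime} $W$-reachable integers before Sylvester's theorem can be invoked. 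A smaller point: your description of \cite{Wil78} as the ``Wilf conjecture'' paper is off target --- the bound relevant here is stated there in terms of the smallest and largest generators, not in terms of gaps or multiplicity. So the proposal is acceptable as a citation-plus-arithmetic argument, matching the paper's own level of detail, but the fallback route you sketch for the case where the citation must be unpacked does not work as written.
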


Here, we are interested in a variant of the Frobenius problem on arbitrary subsets $W$ of $\N$ or $-\N$,
where we look for a bound beyond which every integer is $W$-reachable if, and only if, it is a multiple of gcd$(W)$.
When $W$ is a set of mutually prime positive integers, this is exactly the Frobenius problem.
Otherwise, let $W \subseteq \N$ and $d =$ gcd$(W)$. Consider the set $W' = \{\frac{w}{d}\ |\ w \in W\}$,
which contains mutually prime positive integers.
Let $F$ be the solution to the Frobenius problem for $W'$.
Then the set of $W$-reachable integers greater than $dF$ is equal to the set of multiples of $d$ greater than $dF$.
Consequently, the solution to our problem for $W$ is $dF$.
For $W \subseteq -\N$, we procede analogously.

Actually, the computation of $F$ is hard and the bound in Theorem~$6$ has at most twice the size of~$W$. 
That is why we use the following function in the algorithm.
Let $W \subseteq \N$ such that gcd$(W) = 1$. We write $\tilde{F}(W) = \max(W)^2$.
Let $W \subseteq -\N$ such that gcd$(W) = 1$. We write $\tilde{F}(W) = -\max(-W)^2 = -\min(W)^2$, under which all integers are $W$-reachable.
We extend $\tilde{F}(W)$ when gcd$(W) = d \not= 1$:
Let $W' = \{\frac{w}{d}\ |\ w \in W\}$, we set $\tilde{F}(W) := d\tilde{F}(W') =\max(|W|)^2/d$.
In the particular case where $W$ is a singleton, we set $\tilde{F}(W) = 0$.

Finally, when $W$ is neither included in $\N$ nor in $-\N$, we decide $W$-reachability according to the following lemma.

\label{lemma1}\begin{lemma}
Let $W$ be a finite subset of $\Z$ that has two elements of opposite signs.
An integer is $W$-reachable if, and only if, it is a multiple in $\Z$ of gcd$(W)$.
\end{lemma}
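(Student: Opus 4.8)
The plan is to prove the two implications separately, the easy one being the ``only if'' direction. Write $d = \gcd(W)$. Any $W$-reachable integer is, by definition, a nonnegative integer combination $\sum_{w \in W} a_w w$ of elements of $W$, hence in particular a $\Z$-linear combination of elements of $W$, and so a multiple of $d$. This needs no hypothesis on the signs of the elements of $W$.

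For the converse, let $m$ be any multiple of $d$. By B\'ezout's identity, $d$, and therefore $m$, can be written as a $\Z$-linear combination $m = \sum_{w \in W} a_w w$ with coefficients $a_w \in \Z$ that may well be negative. The idea is to repair the negative coefficients by adding a ``null combination'' $\sum_{w \in W} b_w w = 0$ whose coefficients $b_w$ are nonnegative and all at least $1$; this is precisely where the hypothesis that $W$ has elements of both signs is used. Fixing a positive element $p \in W$ and a negative element $-q \in W$ with $q > 0$, I build such a $(b_w)_{w \in W}$ by summing, over every $w \in W$, a pairwise cancellation that involves $w$: when $w > 0$, take $q$ copies of $w$ together with $w$ copies of $-q$; when $w < 0$, take $p$ copies of $w$ together with $-w$ copies of $p$; and when $w = 0$, simply one copy of $w$. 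Each such block contributes $0$ to the total while assigning $w$ a strictly positive coefficient, so the resulting $(b_w)_{w \in W}$ is a null combination with every coefficient $\ge 1$.

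It then suffices to choose $N \in \N$ large enough that $a_w + N b_w \ge 0$ for every $w \in W$, which is possible since each $b_w \ge 1$; we obtain $m = \sum_{w \in W} (a_w + N b_w)\, w$, a nonnegative integer combination of elements of $W$, so $m$ is $W$-reachable.

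I expect the only delicate step to be the construction of the null combination with all coefficients strictly positive, which is exactly the point that breaks down without the two-opposite-signs assumption: for $W \subseteq \N$ one only recovers cofinitely many multiples of $d$, in accordance with the discussion of the Frobenius problem above. Everything else is routine divisibility and B\'ezout bookkeeping.
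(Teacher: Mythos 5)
Your proof is correct and rests on the same key mechanism as the paper's: after Bézout's identity, the two opposite-sign elements $p$ and $-q$ supply the cancellation $q\cdot p + p\cdot(-q)=0$ that lets you turn an arbitrary integer combination into a nonnegative one. The paper phrases this as replacing each negative term $a_w w$ by $(-a_w)(-w)$ after showing $-w$ is itself $W$-reachable, whereas you add a large multiple of a global null combination with all coefficients at least $1$; this is only a difference in bookkeeping, not in substance.
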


\begin{proof}
Consider two elements $w > 0$ and $w' < 0$ of $W$.
The integers $-w$ and $-w'$ are $W$-reachable
because $-w = (-w'-1)w + w w'$ and $-w' = (w-1)w' + (-w')w$,
which are combinations with only nonnegative coefficients.

By B\'{e}zout's identity, there exist integer coefficients $a_w$ such that $\sum_{w \in W} a_w w = $ gcd$(W)$.
We replace $a_w w$ by $(-a_w) \cdot (-w)$ for all negative coefficients $a_w$.
The resulting linear combination has only positive coefficients, therefore gcd$(W)$ is $W$-reachable, as well as $-$gcd$(W)$.
We conclude that $W$-reachability is equivalent to membership in gcd$(W)\Z$.
\end{proof}

\subsection{A theorem by Sylvester}

This section aims at giving an alternative way to bound the solution to the Frobenius problem.
Using Theorem~$6$ is simpler, but we can have a sharper bound. 

Theorem~$8$ relies on the extended Euclidean algorithm, presented in \cite[p. 937]{CLRS09}. 
With an iteration of the algorithm to more than two integers
according to the way we present the gcd of a set, we can prove Corollary~$9$. 

\label{thm5}\begin{theorem}
Let $a,b$ be two integers. B\'{e}zout coefficients for $a$ and $b$, that is to say integers $u,v$ such that $ua + vb =$ gcd$(a,b)$,
can be computed with a time complexity polynomial in the size of the binary encoding of $a$ and $b$.
\end{theorem}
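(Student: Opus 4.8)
The plan is to inspect the extended Euclidean algorithm of \cite[p. 937]{CLRS09} and verify that it performs few iterations and that every integer it stores stays small. First I would normalise: negating $a$ (or $b$) leaves gcd$(a,b)$ unchanged and merely flips the sign of the corresponding B\'{e}zout coefficient, and swapping $a$ and $b$ swaps the two coefficients, so I may assume $a \ge b \ge 0$ at a constant-time cost. If $b = 0$ the output is $(u,v) = (1,0)$, so from now on assume $a \ge b \ge 1$.

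Next I would recall the recursion. With $r_0 = a$, $r_1 = b$, $r_{i+1} = r_{i-1} \bmod r_i$ and $q_i = \lfloor r_{i-1}/r_i \rfloor$, the algorithm maintains in parallel the sequences $u_0 = 1$, $v_0 = 0$, $u_1 = 0$, $v_1 = 1$, $u_{i+1} = u_{i-1} - q_i u_i$, $v_{i+1} = v_{i-1} - q_i v_i$, and one checks by induction the invariant $u_i a + v_i b = r_i$. When the first vanishing remainder $r_{n+1} = 0$ is met, $r_n = $ gcd$(a,b)$ and $(u,v) = (u_n,v_n)$ is the sought pair. The algorithm performs $n$ iterations, each consisting of one division with remainder and two multiply-then-subtract updates.

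Then I would bound $n$ and the bit-sizes involved. For $n$: whenever $r_i > r_{i-1}/2$ we have $q_i = 1$ and $r_{i+1} = r_{i-1} - r_i < r_{i-1}/2$, and whenever $r_i \le r_{i-1}/2$ we have $r_{i+1} < r_i \le r_{i-1}/2$; so $r_{i+1} < r_{i-1}/2$ for every $i$, whence $r_{2j} < a/2^j$ and $n = O(\log a)$, which is linear in the size of the input. Every remainder lies in $\llbracket 0,a\rrbracket$, so each division with remainder is polynomial in the bit-size of $a$. For the coefficients, note that $q_i \ge 1$ and the signs of the $u_i$ alternate (since $u_{i+1} = u_{i-1} - q_i u_i$ has the sign of $u_{i-1}$ when $u_i$ has the opposite one), so $|u_{i+1}| = |u_{i-1}| + q_i |u_i|$; matching this telescoping sum against the invariant and the strictly decreasing remainders yields $|u_i| \le b$ and $|v_i| \le a$ for all $i \le n$. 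Hence each update multiplies and subtracts integers of bit-size $O(\log a)$ and is again polynomial. Multiplying the per-iteration cost by the $O(\log a)$ iterations gives a total running time polynomial in the size of the binary encoding of $a$ and $b$.

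The step I expect to be the main obstacle is precisely the bound $|u_i| \le b$, $|v_i| \le a$: without it, the iterated products $q_i u_i$ could a priori blow up to exponential size and wreck the complexity claim. The sign-alternation argument sketched above is what controls it, turning the recursion into a monotone telescoping sum whose growth is capped, by comparison with the remainder sequence, at $b$ (respectively $a$).
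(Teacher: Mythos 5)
Your proposal is correct and follows the same route as the paper, which simply invokes the extended Euclidean algorithm of \cite[p. 937]{CLRS09} without further argument; you supply the standard analysis (logarithmically many iterations via $r_{i+1} < r_{i-1}/2$, and coefficient control via the sign-alternation bound $|u_i| \le b$, $|v_i| \le a$) that the citation delegates. No gaps.
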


\label{cor2}\begin{corollary}
B\'{e}zout coefficients for a finite subset of $\Z$ are computable
with a polynomial time complexity in the size of the binary encoding of the integers in the subset.
\end{corollary}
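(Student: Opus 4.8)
The plan is to iterate the extended Euclidean algorithm along the very recursive scheme we used to define the gcd of a set. Write the subset as $W = \{w_1,\dots,w_n\}$, set $d_1 = w_1$ and $d_{i+1} =$ gcd$(d_i,w_{i+1})$, so that $d_n =$ gcd$(W)$. First I would run, for each $i$ from $1$ to $n-1$, the algorithm of Theorem~$8$ on the pair $(d_i,w_{i+1})$ to obtain integers $p_i,q_i$ with $p_i d_i + q_i w_{i+1} = d_{i+1}$. Then I would recover B\'{e}zout coefficients $(a_1,\dots,a_n)$ for $W$ by back-substitution: maintaining the invariant $d_i = \sum_{j=1}^i a_j^{(i)} w_j$, one starts from $a^{(1)}_1 = 1$ and passes from level $i$ to level $i+1$ by multiplying every current coefficient $a^{(i)}_j$ by $p_i$ and appending $q_i$ as the coefficient of $w_{i+1}$. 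A routine induction then shows that the final tuple $(a_1,\dots,a_n) := (a^{(n)}_1,\dots,a^{(n)}_n)$ satisfies $\sum_{j=1}^n a_j w_j = d_n =$ gcd$(W)$.

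It remains to bound the running time. The number $n$ of elements of $W$ is at most the size of the input, so there are only polynomially many stages. Each of the $n-1$ calls to the extended Euclidean algorithm is polynomial in the size of its arguments by Theorem~$8$, and those arguments $d_i$ and $w_{i+1}$ are bounded in absolute value by $M := \max_{w \in W}|w|$, since $d_{i+1}$ divides $d_i$ and $d_1 = w_1$. The only delicate point is the bit-length of the coefficients produced by the back-substitution: the coefficient of $w_j$ in the final tuple is a product of at most $n$ of the B\'{e}zout multipliers $p_i$ and $q_i$. Invoking the standard bound that a B\'{e}zout coefficient returned by the extended Euclidean algorithm on a pair $(a,b)$ has absolute value at most $\max(|a|,|b|)$, each $|p_i|$ and $|q_i|$ is at most $M$, so every coefficient of the output has bit-length at most $n \log_2 M$, which is linear in the size of the input. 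Hence the $O(n)$ big-integer multiplications needed to assemble these products are each polynomial, and there are polynomially many of them, so the total cost is polynomial.

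The main obstacle is precisely this control on the size of the accumulated coefficients. Without the bound on the multipliers $p_i$ and $q_i$, one could fear that the bit-length roughly doubles at each of the $n$ stages, which would be exponential in the size of the input; it is the standard estimate on the B\'{e}zout coefficients output by the extended Euclidean algorithm that keeps the growth additive, hence linear overall. Everything else is bookkeeping.
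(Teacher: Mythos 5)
Your proof is correct and follows exactly the route the paper intends: iterating the extended Euclidean algorithm of Theorem~$8$ along the recursive definition of the gcd of a set, with back-substitution to assemble the coefficients. The paper only sketches this in one sentence, and your added care about the bit-length of the accumulated coefficients (via the standard bound on the Bézout multipliers) is the right detail to check; note only that $n\log_2 M$ is in general quadratic rather than linear in the input size $\sum_{w\in W}\log|w|$, which is still polynomial and does not affect the conclusion.
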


The article \cite{PRS05} mentions a theorem concerning the Frobenius problem, due to Sylvester in \cite{Syl1882}.

\label{thm6}\begin{theorem}
Let $W = \{p,q\}$ be an instance of the Frobenius problem. Then the greatest non-$W$-reachable integer is $pq - p - q$.
Moreover, an integer $x$ is reachable if, and only if, $pq - p - q - x$ is not,
which means exactly half of the integers between $0$ and $pq - p - q$ are reachable.
\end{theorem}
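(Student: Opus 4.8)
The plan is to exploit the canonical representation of an integer modulo $q$. Since $p$ and $q$ are mutually prime, B\'{e}zout's identity (Corollary~$9$) provides, for every $x \in \Z$, integers $a,b$ with $x = ap + bq$, and the full solution set is $\{(a - tq,\, b + tp) \mid t \in \Z\}$; in particular there is a unique representation $x = a_x p + b_x q$ with $a_x \in \llbracket 0, q-1 \rrbracket$. The first step is to prove the characterisation: $x$ is $\{p,q\}$-reachable if, and only if, $b_x \ge 0$. The ``if'' direction is immediate. For ``only if'', suppose $x = a'p + b'q$ with $a',b' \in \N$; since $a' \ge 0$, $a' \equiv a_x \pmod q$ and $0 \le a_x \le q-1$, we must have $a' = a_x + kq$ for some $k \ge 0$, whence $b' = b_x - kp \le b_x$, so $b_x \ge b' \ge 0$.

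Given this, I would locate the Frobenius number as follows. An integer is non-reachable exactly when $b_x \le -1$, so the non-reachable integers are precisely those of the form $ap + bq$ with $0 \le a \le q-1$ and $b \le -1$; this value is maximised at $a = q-1$, $b = -1$, giving $pq - p - q$, which is indeed non-reachable since its canonical coefficients are $a_x = q-1$ and $b_x = -1 < 0$. Conversely, if $x > pq - p - q$, then from its canonical representation $x = a_x p + b_x q$ we get $b_x q = x - a_x p > (pq - p - q) - (q-1)p = -q$, hence $b_x \ge 0$ and $x$ is reachable. This establishes that $pq - p - q$ is the greatest non-$W$-reachable integer.

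For the symmetry, set $N = pq - p - q$ and note $N - x = (q - 1 - a_x)p + (-1 - b_x)q$; since $0 \le q - 1 - a_x \le q - 1$, this is the canonical representation of $N - x$, with second coefficient $-1 - b_x$. Hence $N - x$ is reachable iff $-1 - b_x \ge 0$, that is, iff $b_x \le -1$, that is, iff $x$ is not reachable; and this holds for every $x \in \Z$, because when $x \notin \llbracket 0, N \rrbracket$ one of $x$, $N - x$ is negative (hence automatically non-reachable) while the other exceeds $N$ (hence reachable by the previous paragraph). Finally, the involution $x \mapsto N - x$ of $\llbracket 0, N \rrbracket$ is fixed-point-free because $N$ is odd --- mutual primality forbids $p$ and $q$ from both being even, and checking the remaining parities of $p$ and $q$ shows $pq - p - q$ is odd in each case --- so $\llbracket 0, N \rrbracket$, a set of $N + 1 = (p-1)(q-1)$ integers, splits into $(p-1)(q-1)/2$ pairs $\{x, N - x\}$ each containing exactly one reachable integer; thus exactly half of the integers between $0$ and $pq - p - q$ are reachable. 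There is no serious obstacle here: the argument is elementary, and the only delicate points are getting the inequalities and the range $\llbracket 0, q-1 \rrbracket$ correct in the canonical-representation characterisation and carrying out the small parity check that makes the involution fixed-point-free.
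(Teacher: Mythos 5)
Your proof is correct. Note that the paper itself gives no proof of this theorem: it is quoted as a classical result of Sylvester via \cite{PRS05}, so there is no ``paper approach'' to compare against. Your argument is the standard self-contained one: the canonical representation $x = a_x p + b_x q$ with $a_x \in \llbracket 0, q-1\rrbracket$ is well defined and unique because $p$ is invertible modulo $q$, the characterisation ``reachable iff $b_x \ge 0$'' is proved correctly in both directions, the maximisation giving $pq-p-q$ and the bound for $x > pq-p-q$ are right, and the observation that $N - x = (q-1-a_x)p + (-1-b_x)q$ is again in canonical form cleanly yields the duality for all $x \in \Z$. The parity check making the involution fixed-point-free is also fine (alternatively, a fixed point $x = N/2$ would be simultaneously reachable and non-reachable by the duality you just proved, which is a one-line contradiction), and the count $(p-1)(q-1)/2$ follows. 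The only degenerate cases, $p=1$ or $q=1$, are handled consistently with the paper's convention that the Frobenius number is then $-1 = pq-p-q$. No gaps.
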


Such a simple statement does not extend well when $W$ has more than two elements.
If there are two mutually prime integers in $W$, then we can take them, else
we have to find two mutually prime $W$-reachable integers to get an upper bound of the maximal non-$W$-reachable integer.
For example, if $W = \{6,10,15\}$, then there is no pair of mutually prime integers in $W$ even though gcd$(W) = 1$.
Nevertheless, $25$ is $W$-reachable and we have gcd$(6,25) = 1$, hence every integer above $6 \cdot 25 - 6 - 25$ is $W$-reachable.
In any case, Proposition~$11$ gives a way to apply the above theorem 
and there is only a finite number of integers for which we cannot find out immediately whether they are $W$-reachable or not.

\label{prop9}\begin{proposition}
Let $W$ be a subset of $\Z$ such that gcd$(W) = 1$.
There is a polynomial time algorithm that gives a pair of mutually prime $W$-reachable integers.
\end{proposition}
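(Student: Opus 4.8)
The plan is to reduce the statement to the polynomial-time computation of B\'{e}zout coefficients given by Corollary~9. First I would compute, in polynomial time, integers $(c_w)_{w \in W}$ with $\sum_{w \in W} c_w w = \gcd(W) = 1$; by Corollary~9 each $c_w$ has bit-size polynomial in the input. Splitting this identity according to the sign of the coefficients, I set
\[
a = \sum_{w \in W,\ c_w > 0} c_w\, w
\qquad\text{and}\qquad
b = \sum_{w \in W,\ c_w < 0} (-c_w)\, w .
\]
Each of $a$ and $b$ is by construction a nonnegative integer combination of elements of $W$, hence lies in $\langle W\rangle_\N$ and is $W$-reachable. Since $a - b = \sum_{w \in W} c_w w = 1$, any common divisor of $a$ and $b$ divides $1$, so $\gcd(a,b) = 1$: the integers $a$ and $b$ are consecutive, hence mutually prime. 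Computing $a$ and $b$ from the $c_w$ is polynomial and they are of polynomial size, which gives the algorithm.

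Two degenerate cases deserve a word. If $W$ has elements of both signs, then by Lemma~7 every integer is $W$-reachable, so $(1,2)$ works with no computation at all; the genuine content of the proposition is therefore the case $W \subseteq \N$ (the case $W \subseteq -\N$ being handled by running the construction on $-W$ and negating the output). For $W \subseteq \N$ one has $a \ge 1$ and $b \ge 0$, and the only thing that could go wrong is $b = 0$, i.e.\ $a = 1$ --- equivalently $1$ is already $W$-reachable; then I would replace $b$ by $1 + w$ for an arbitrary $w \in W$, which is again $W$-reachable and coprime to $1$, so that the output consists of two positive mutually prime $W$-reachable integers, which is the form needed to apply Theorem~10.

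I do not expect a real obstacle here: the mathematical heart of the argument is just the observation that collecting the positive and negative parts of a B\'{e}zout identity produces two consecutive $W$-reachable integers. The single point that truly requires care, and the reason the claim is about a \emph{polynomial-time} algorithm rather than mere existence, is the control of sizes: one must invoke Corollary~9 to keep the B\'{e}zout coefficients --- and hence $a$ and $b$ --- polynomially bounded, since a naive chaining of the two-variable extended Euclidean algorithm across the elements of $W$ could blow the coefficients up exponentially.
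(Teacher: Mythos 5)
Your proposal is correct and follows essentially the same route as the paper: both split the B\'{e}zout identity $\sum c_w w = 1$ by the sign of the coefficients to obtain two $W$-reachable integers whose difference is $1$, hence mutually prime, with size control coming from the polynomial-time extended Euclidean algorithm of Corollary~9. The only divergence is cosmetic, in the degenerate situations where one of the two sums is empty: the paper pairs an element $w_1$ of $W$ with the remaining combination, while you patch with $1+w$ (and dispatch the mixed-sign case via Lemma~7), both of which work.
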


\begin{proof}
If two integers in $W$ are mutually prime, then there is nothing to do.
Else consider the linear combination, obtained with the extended Euclidean algorithm,
$\sum_{i = 1}^n a_i w_i = 1$ for $w_1, \dots, w_n \in W$.
We suppose that the terms are ordered such that for a certain $1 \le k \le n+1$ all $a_i, i < k$ are positive
and all $a_i, i \ge k$ are negative.
First note that $n \ge 3$, else two integers in $W$ would be mutually prime.
Let us distinguish three cases:
\begin{itemize}
\item If $k = 1$, in other words all $a_i$ are negative,
then consider the $W$-reachable integers $p := w_1$ and $q := \sum_{i = 2}^n (-a_i) w_i$.
We apply B\'{e}zout's theorem: $p$ and $q$ are mutually prime because $a_1 p - q = 1$.
\item Similarly, if $k = n+1$, in other words all $a_i$ are positive,
then the $W$-reachable integers $p := w_1$ and $q := \sum_{i = 2}^n a_i w_i$ are mutually prime.
\item Else, the $W$-reachable integers $p := \sum_{i = 1}^{k-1} a_i w_i$ and $q := \sum_{i = k}^n -a_i w_i$
are mutually prime and defined by a non-empty sum.
\end{itemize}
\end{proof}

\subsection{The algorithm}

We have now all necessary tools to solve robot games.
The main idea is to iterate the computation of $\cpre$
until we establish that we can describe the winning set with the finite set obtained so far. 

We prove in Proposition~$12$ and its corollary that, for $X$ the set that we compute in the first step of our algorithm
and Win the winning set in the robot game,
$X \subseteq$ Win and that, for a well-chosen set $Y'$ of $X$-reachable counter values, 
if gcd($\cpre(Y')$) $= \text{gcd}(X)$, then also gcd(Win) $= \text{gcd}(X)$.
Basically, the first step relies on this property: We compute successive $\cpre$, and once the step ends we get gcd(Win).
Actually, to keep control over the complexity, we do not do $X := X\ \cup\ \cpre(X)$,
but only add to $X$ a single element $y$ of the computed $\cpre$
such that gcd$(X) \not=$ gcd$(X \cup \{y\})$.

Once we find gcd(Win), there are two cases. In the first case, Lemma~$7$ can be applied and we are done, 
because Win has two elements with opposite signs. Hence, the winning set is gcd(Win)$\Z$.
In the second case, the winning set is included in one of the sets $\N$ or $-\N$;
we suppose without loss of generality that the winning set is included in $\N$.
Theorem~$6$ yields a bound above which 
all multiples of gcd(Win) and only them
are winning because they are $X$-reachable.
Therefore, the only set of counter values about which we still do not know whether they are winning or not is empty or bounded and,
by Proposition~$5$, we can compute an attractor on the restricted arena. 

\label{prop5}\begin{proposition}
Let Win be the winning set in a robot game $(U,V)$,
and $d \in \N$ be a multiple of gcd(Win) that is not gcd(Win). Let $Y = d\Z\ \cap$ Ampl$^d(U,V)$.
Then we have $\cpre(Y) \setminus d\Z \not= \emptyset.$
\end{proposition}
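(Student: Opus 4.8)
The plan is to argue by contradiction: assume $\cpre(Y) \subseteq d\Z$ and derive that $d\Z \cap \text{Ampl}(U,V)$ is, in effect, a complete description of the winning set near the origin, so that $d$ would divide every winning counter value, contradicting that $d$ is not $\text{gcd}(\text{Win})$. First I would recall the round-based meaning of $\cpre$: for a set $X$, a counter value $x$ lies in $\cpre(X)$ precisely when from $x$ the reacher can force the current round to end in $X$. In particular, $\cpre(\{0\}) \cup \{0\}$ is the two-step attractor intersected with the opponent vertices, and more generally iterating $\cpre$ from $\{0\}$ produces an increasing chain whose limit is $\text{Win}$. Since $0 \in Y$ (because $0 \in d\Z$ and $0 \in \text{Ampl}(U,V)$, as $\min(V)+\min(U) \le 0 \le \max(V)+\max(U)$ when the winning set is nonempty — and if the winning set were empty the statement is about a non-existent $\text{gcd}$, so we may assume it is nonempty), we have $\cpre(\{0\}) \subseteq \cpre(Y)$.

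The heart of the argument is to show that under the assumption $\cpre(Y) \subseteq d\Z$, every element of $\text{Win}$ is a multiple of $d$, which is the desired contradiction. For this I would show that $\text{Win} \cap \text{Ampl}^d(U,V) \subseteq d\Z$ and then bootstrap. The key observation is a ``locality'' property: if a counter value $x$ is winning, then after the first round the play reaches some winning value $x' = x + v + u$ with $v \in V$, $u \in U$, and by induction on the length of the winning play we can keep all intermediate reached values inside a bounded band around $0$ once we know the tail is governed by $d\Z$. More precisely, I would prove that any winning $x$ with $x \notin d\Z$ would have to lie in $\cpre(Z)$ for $Z = (\text{Win} \cap \text{Ampl}^d(U,V))$, hence in $\cpre(d\Z \cap \text{Ampl}^d(U,V)) = \cpre(Y) \subseteq d\Z$, a contradiction — so no such $x$ exists, i.e. $\text{Win} \subseteq d\Z$. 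To make the band-argument precise, I would use that a one-round step changes the counter by an element of $\text{Ampl}(U,V)$, so if the destination after a round lands in $\text{Ampl}^d(U,V)$ the start was within an extra $d$ of it; coupled with the fact (from Proposition~4's hypotheses in the calling context, or directly) that it suffices to track winning values in a window, this confines the induction.

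Concretely the steps are: (1) note $0 \in Y$ and hence $\cpre$ iterated from $\{0\}$ stays inside $\cpre(Y)$ at the first iteration; (2) assume for contradiction $\cpre(Y) \subseteq d\Z$; (3) let $W_0 = \{0\}$ and $W_{n+1} = W_n \cup \cpre(W_n)$, so $\text{Win} = \bigcup_n W_n$, and prove by induction on $n$ that $W_n \cap \text{Ampl}^d(U,V) \subseteq d\Z$, using at the inductive step that a value $x \in \cpre(W_n)$ in $\text{Ampl}^d(U,V)$ can force the round into $W_n \cap (\text{Ampl}^d(U,V) \text{ shifted by } \text{Ampl}(U,V))$, which by a size bound on $\text{Ampl}^d$ lies in the inductive hypothesis's reach, hence into $d\Z \cap \text{Ampl}^d(U,V) \subseteq Y$, so $x \in \cpre(Y) \subseteq d\Z$; (4) conclude $\text{Win} \subseteq d\Z$, so $\text{gcd}(\text{Win})$ is a multiple of $d$, but we assumed $d$ is a proper multiple of $\text{gcd}(\text{Win})$, which forces $\text{gcd}(\text{Win}) = d$ — contradicting the hypothesis that $d \neq \text{gcd}(\text{Win})$. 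Therefore $\cpre(Y) \not\subseteq d\Z$.

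The main obstacle I expect is step~(3): pinning down the exact window so that the inductive step is genuinely self-contained, i.e. that whenever a winning value sits in $\text{Ampl}^d(U,V)$, the reacher's forced successor after one round still sits in a set already covered by the induction hypothesis rather than escaping the band. This requires choosing the amplitude padding $d$ in $\text{Ampl}^d$ exactly right — it is presumably no accident that the proposition uses $\text{Ampl}^d$ with the \emph{same} $d$ — and carefully checking that a single round's displacement, being in $\llbracket \min(V)+\min(U), \max(V)+\max(U)\rrbracket$, combined with $\text{Ampl}^d$ lands inside $d\Z \cap \text{Ampl}^d(U,V) \subseteq Y$, possibly after one more application of the monotonicity of $\cpre$. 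The rest — the $0 \in Y$ remark, the fixpoint characterization of $\text{Win}$, and the divisibility bookkeeping at the end — is routine.
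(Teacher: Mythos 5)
Your overall plan (contradiction from $\cpre(Y)\subseteq d\Z$, ending with $\text{Win}\subseteq d\Z$ and the divisibility bookkeeping) matches the paper's, but the middle of the argument has a genuine gap, and the missing idea is precisely the one you do not invoke: \emph{periodicity modulo $d$}. The set $d\Z$ is invariant under translation by $d$, and the game is translation-invariant, so $\cpre(d\Z)$ is itself $d$-periodic. This single observation does both jobs at once. First, it reduces the windowed hypothesis to the global one: if $x\in\cpre(d\Z)\setminus d\Z$, replace $x$ by its residue $x \bmod d \in \llbracket 0,d-1\rrbracket$; the residue is still in $\cpre(d\Z)\setminus d\Z$, and from a value in $\llbracket 0,d-1\rrbracket$ every one-round destination $x+v+u$ lies in Ampl$^d(U,V)$, so the residue lies in $\cpre(d\Z\cap\text{Ampl}^d(U,V))=\cpre(Y)$ --- hence $\cpre(Y)\subseteq d\Z$ forces $\cpre(d\Z)\subseteq d\Z$. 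Second, $\cpre(d\Z)\subseteq d\Z$ means the complement of $d\Z$ is a trap for the opponent: from any $x\notin d\Z$ the opponent has a move keeping every possible round-end outside $d\Z$, and since $0\in d\Z$ the reacher never wins, giving $\text{Win}\subseteq d\Z$ and the contradiction. This is the paper's proof, and it explains why the padding in Ampl$^d$ is exactly $d$: it only needs to absorb a residue in $\llbracket 0,d-1\rrbracket$, not to be closed under game dynamics.

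Your substitute for this --- the induction on attractor levels $W_n$ with the invariant $W_n\cap\text{Ampl}^d(U,V)\subseteq d\Z$ --- does not close. For $x\in\text{Ampl}^d(U,V)$ one has $x+\text{Ampl}(U,V)\subseteq\text{Ampl}^d(U,V)$ only when $-d\le x\le d$; for $x$ near the ends of the band the forced successors $x+v+u$ escape Ampl$^d(U,V)$ and are not covered by the induction hypothesis, so the step ``$x\in\cpre(W_n)\cap\text{Ampl}^d\Rightarrow x\in\cpre(Y)$'' fails. No finite band is closed under adding Ampl$(U,V)$ unless that interval is $\{0\}$, so no choice of padding repairs this --- your own ``main obstacle'' paragraph identifies the problem but the proposed fix cannot work. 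Likewise, the claim that a winning $x\notin d\Z$ ``would have to lie in $\cpre(\text{Win}\cap\text{Ampl}^d(U,V))$'' is false for $x$ far from $0$, and the final ``bootstrap'' from the band to all of Win is never supplied; both again require the mod-$d$ periodicity rather than a band-confinement argument.
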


\begin{proof}
First, we establish that if $\cpre(d\Z)$ is not included in $d\Z$,
then neither is $\cpre(d\Z\ \cap$ Ampl$^d(U,V))$.
Let $x \in \cpre(d\Z) \setminus d\Z$.
All counter values $x + v + u$ for $v \in V$ and $u \in U$ are included in the interval Ampl$(U,V) + x$,
a fortiori when $u$ is chosen according to $x$ and $v$ such that $x + v + u \in d\Z$.
In this case, $x$ mod $d$ belongs to $\cpre(d\Z\ \cap$ Ampl$^d(U,V)) \setminus d\Z$, and it is outside $d\Z$ too.

Second, we prove the proposition by contrapositive:
Suppose that $\cpre(d\Z\ \cap$ Ampl$^d(U,V))$ is included in $d\Z$.
We just proved that it implies the inclusion of $\cpre(d\Z)$ in $d\Z$.
As a consequence, from any counter value outside $d\Z$, there exists an opponent move such that for all reacher moves,
the next round begins outside $d\Z$ too, in particular it is impossible for the reacher to have a winning strategy.
Hence, $d$ divides gcd(Win).
\end{proof}

We need to adapt this result because we do not know whether $Y \subseteq Win$
and we look for a statement that allows us to find winning counter values.
That is why we define the regularity interval $I_{(U,V)}(X)$ of a finite subset $X$ of $\Z$ neither empty nor equal to $\{0\}$ in a robot game $(U,V)$.
The elements of this interval are $X$-reachable if, and only if, they are multiples of gcd($X$).
\begin{itemize}
\item If $X \subset \N$, then
$I_{(U,V)}(X) := (\tilde{F}(X) - \min(V) - \min(U) + d) + \text{ Ampl}^d(U,V)$,
the lower bound of this interval is $\tilde{F}(X)$.
\item If $X \subset -\N$, then
$I_{(U,V)}(X) := (\tilde{F}(X) - \max(V) - \max(U) - d) + \text{ Ampl}^d(U,V)$,
the upper bound of this interval is $\tilde{F}(X)$.
\item Else, $I_{(U,V)}(X) :=$ Ampl$^d(U,V)$.
\end{itemize}

\label{cor1}\begin{corollary}
Let Win be the winning set in a robot game $(U,V)$,
and $X \subset$ Win such that gcd$(X) = d >$ gcd(Win). Let $Y' = I_{(U,V)}(X) \cap d\Z$.
Then we have $\cpre(Y') \setminus d\Z \not= \emptyset.$
As a consequence, if Win $\not\subseteq d\Z$,
then we can compute a certain element of the difference in space polynomial in~$|U|$ and~$|V|$.
\end{corollary}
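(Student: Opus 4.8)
The plan is to derive Corollary~13 from Proposition~12 by a change-of-variables / translation argument, plus a bound on where the ``regular'' behaviour of an $X$-reachable set kicks in. First I would observe that, since $X \subset \mathrm{Win}$ and the winning set is closed under $X$-reachability (the consequence of Proposition~1), every $X$-reachable counter value is winning. By the definition of $\tilde F$ and our variant of the Frobenius problem discussed just before Lemma~7, for $X \subset \N$ every multiple of $d = \mathrm{gcd}(X)$ that is at least $\tilde F(X)$ is $X$-reachable, hence winning; symmetrically for $X \subset -\N$; and by Lemma~7, if $X$ has elements of both signs then \emph{every} multiple of $d$ is $X$-reachable, hence winning. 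So in each of the three cases there is an interval (a half-line in the first two cases, all of $d\Z$ in the third) of multiples of $d$ that are known to be winning, and the regularity interval $I_{(U,V)}(X)$ is chosen precisely so that $I_{(U,V)}(X) \cap d\Z$ lies inside that set of known-winning multiples of $d$.

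Next I would reduce to Proposition~12. Proposition~12 is stated for the set $Y = d\Z \cap \mathrm{Ampl}^d(U,V)$ and concludes $\cpre(Y) \setminus d\Z \neq \emptyset$ whenever $d$ is a multiple of $\mathrm{gcd}(\mathrm{Win})$ that is not $\mathrm{gcd}(\mathrm{Win})$ — which is exactly our hypothesis on $d$. The set $Y'$ in the corollary is $I_{(U,V)}(X) \cap d\Z$, which by construction is a translate of $Y$ by some explicit integer $t$ (namely $t = \tilde F(X) - \min(V) - \min(U) + d$ in the first case, $t = \tilde F(X) - \max(V) - \max(U) - d$ in the second, and $t = 0$ in the third), and $t$ is itself a multiple of $d$ in each case, so translating by $t$ preserves membership in $d\Z$. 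Using the translation-invariance of robot games noted before Proposition~1, $\cpre$ commutes with translation by $t$: $\cpre(Y + t) = \cpre(Y) + t$. Hence $\cpre(Y') \setminus d\Z = (\cpre(Y) + t) \setminus d\Z = (\cpre(Y) \setminus d\Z) + t$, which is nonempty by Proposition~12. This proves the first assertion.

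For the ``as a consequence'' part, suppose $\mathrm{Win} \not\subseteq d\Z$. We have just produced a nonempty set $\cpre(Y') \setminus d\Z$; I would pick any element of it explicitly. To do so in polynomial space I would note that $Y'$ is a finite set of multiples of $d$ whose description (endpoints of $I_{(U,V)}(X)$) is polynomial in $|U|$, $|V|$ and the bit-sizes of the elements of $X$ — the dominant term is $\tilde F(X)$, of size roughly twice that of $X$ — and that membership of a candidate $x$ in $\cpre(Y')$ is a quantifier alternation $(\forall v \in V)(\exists u \in U)\, x+u+v \in Y'$ over the finite sets $U$ and $V$, checkable in polynomial time. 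The candidates $x$ one needs to test can be taken in $\mathrm{Ampl}^d(U,V) + t$ (as in the proof of Proposition~12, reducing $x$ modulo $d$ keeps it in a window of width proportional to the amplitude plus $d$), a set whose size is polynomial in the input. Scanning this window and testing membership in $\cpre(Y')$ yields an element, using only polynomial space.

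The main obstacle I expect is bookkeeping rather than ideas: getting the three cases of $I_{(U,V)}(X)$ to line up exactly so that (i) $Y' = Y + t$ with $t \in d\Z$, and (ii) all elements of $Y'$ are genuinely $X$-reachable (so that the translated instance of Proposition~12 really applies and the produced witness lies below a point from which winning is guaranteed). In particular one has to be careful that the shift $t$ is large enough that $I_{(U,V)}(X) \cap d\Z$ sits entirely in the Frobenius-regular range — this is why $\tilde F(X)$ appears in the definition — while still being a multiple of $d$, which forces the slightly awkward additive corrections $-\min(V)-\min(U)+d$ and $-\max(V)-\max(U)-d$; checking that these corrections are indeed multiples of $d$ (or that the argument only needs $Y' \subseteq d\Z$ together with $Y' - t \subseteq d\Z$, which is automatic) is the one place where care is required.
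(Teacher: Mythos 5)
Your overall plan is the right one --- the corollary is meant to be proved by rerunning the argument of Proposition~12 on the window $I_{(U,V)}(X)$, whose only role is to guarantee that $Y'$ consists of $X$-reachable, hence winning, multiples of $d$ --- but the specific mechanism you use to transfer Proposition~12, namely ``$Y' = Y + t$ with $t \in d\Z$'', is false in general. The shift is $t = \tilde F(X) - \min(V) - \min(U) + d$; while $\tilde F(X)$ is indeed a multiple of $d$ (it equals $d\max(|X'|)^2$), the correction $-\min(V)-\min(U)+d$ has no reason to be, so $t \notin d\Z$ in general. Consequently $Y + t = (d\Z + t) \cap (t + \mathrm{Ampl}^d(U,V))$ is a set of integers congruent to $t$ modulo $d$, whereas $Y' = d\Z \cap (t+\mathrm{Ampl}^d(U,V))$ consists of multiples of $d$: these are different sets, and the chain $\cpre(Y')\setminus d\Z = (\cpre(Y)\setminus d\Z)+t$ breaks down. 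Your parenthetical fallback (``$Y'-t\subseteq d\Z$, which is automatic'') is also wrong: $Y'-t\subseteq d\Z$ holds if, and only if, $t\in d\Z$.

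The repair is to transfer the \emph{proof} of Proposition~12 rather than its statement. Since $\gcd(\mathrm{Win})$ divides $d$ and $d>\gcd(\mathrm{Win})$, the contrapositive half of Proposition~12's proof gives some $x\in\cpre(d\Z)\setminus d\Z$; for each $v\in V$ fix $u_v\in U$ with $x+v+u_v\in d\Z$. Instead of reducing $x$ modulo $d$ into a window around $0$, take the representative $x'$ of $x$ modulo $d$ lying in $\llbracket t-d,\,t\rrbracket$. Then for every $v$ we get $x'+v+u_v\in t+\mathrm{Ampl}^d(U,V)=I_{(U,V)}(X)$ and $x'+v+u_v\equiv x+v+u_v\equiv 0\pmod d$, so $x'+v+u_v\in Y'$ and hence $x'\in\cpre(Y')\setminus d\Z$. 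This one-line adaptation is what the paper intends, and it also makes your polynomial-space claim immediate, since the candidate witnesses live in an explicit window of width $O(\max(V)+\max(U)-\min(V)-\min(U)+d)$ around $t$. The rest of your write-up --- the role of $\tilde F$ in ensuring $Y'\subseteq\langle X\rangle_\N\subseteq\mathrm{Win}$, the mixed-sign case where $t=0$ and Proposition~12 applies verbatim, and the scanning argument for the computational consequence --- is fine.
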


We apply this idea to the game in Figure~$1$ and find that $-2$ is a winning counter value outside $-3\N$,
because if the opponent plays $3$, then the reacher can play $-1$ and win,
and if the opponent plays $-1$, then the reacher can play $0$,
and in the next round he can play the difference between $3$ and the opponent move to win.
With the notations of the last proposition and of its corollary, we have $\tilde{F}(\{-3\}) = 0$,
the interval $I_{(\{-1,0,4\},\{-1,3\})}(\{-3\})$ is $(0 - 3 - 4 - 3) + \llbracket -5,10\rrbracket = \llbracket -15,0\rrbracket$, and $\cpre(\{-15,-12,\dots,0\})$
is not included in $3\Z$. We pick $-2$ in it. Since gcd($\{-2,-3\}$) $= 1$, we know that gcd(Win) is $1$.

\begin{algorithm}
\caption{Algorithm for solving robot games on the integer line.}
\KwData{A robot game $(U,V)$.}
\KwResult{A description of the winning set.}
{\tt /* Require: Functions computing the sets we use, as defined in the Section~$4$. */}\;
\Begin{
$d\gets 0$\;
$X\gets \cpre(\{0\}) \cup \{0\}$
{\tt /* to avoid handling gcd$(\{0\})$ in the first step */}\;
  \lIf{$X = \{0\}$}{\Return{$X$}}\;\;
{\tt /* Step \textbf{1.} */}\;
\lWhile{$d = 0$}{\;
\Indp
  $d'\gets$ gcd$(X)$\;
  $I\gets I_{(U,V)}(X)$\;
{\tt /* $I$ is a set of $X$-reachable counter values with a large absolute value */}\;
  $Y\gets \cpre(I\ \cap d'\Z)$\;
{\tt /* From $Y$, the reacher can force the next round to end at a counter value known to be winning */}\;
  \lIf{$Y \setminus d'\Z \not= \emptyset$}{$X\gets X \cup \{\min(Y \setminus d'\Z)\}$}
{\tt /* minimum in absolute value */}\;
  \lElse{$d \gets d'$}
{\tt /* We know that $d$ is gcd(Win): we exit the loop */}\;\;
\Indm
}\;
{\tt /* Step \textbf{2.} */}\;
  \lIf{$X \not\subseteq \N \wedge X \not\subseteq -\N$}{\Return{$d\Z$}}
{\tt /* Lemma~$7$ */}\; 
  \lElse{\;
\Indp
  $I\gets$ Ampl$(U,V)$\;
  $b\gets \tilde{F}(X)$\;
    \lIf{$X \subseteq \N$}{\;
  \Indp
      \lIf{$-\N\ \cap \cpre(I \cap d\N) \not= \emptyset$}{\Return{$d\Z$}}
{\tt /* Lemma~$7$, second try */}\; 
      \lElse{$Unbd\gets \{x \in d\Z\ \mid\ x > b\}$}
{\tt /* Half-line of winning counter values */}\;
  \Indm
      }
    \lElse{\;
  \Indp
      \lIf{$\N\ \cap \cpre(I \cap -d\N) \not= \emptyset$}{\Return{$d\Z$}}\;
      \lElse{$Unbd\gets \{x \in d\Z\ \mid\ x < b\}$}\;
  \Indm
      }
  $G\gets$ Restr$^b_d(U,V)$\;
{\tt /* Between $0$ and $b$, we compute the attractor on the restricted arena according to Proposition~$5$ */}\;
  \Return{$Unbd\ \cup \restrattr(G)$}\;
\Indm
  }
}
\end{algorithm}

\label{thm3}\begin{theorem}
Algorithm~$1$ computes the winning set in a robot game in exponential time.
\end{theorem}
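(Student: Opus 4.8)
I would separate correctness — that Algorithm~1 returns the winning set Win — from the running-time bound. For correctness the key is the loop invariant ``$X$ is a finite subset of Win that contains $0$''. This holds at the start: the reacher wins at once from $0$, and from any $x\in\cpre(\{0\})$ he can force the first round to end at $0$, so $\cpre(\{0\})\cup\{0\}\subseteq\text{Win}$; moreover, if this set equals $\{0\}$ the winning set is trivial, so returning $\{0\}$ is correct. Consider now one iteration of Step~1, with $d'=\gcd(X)$ and $I=I_{(U,V)}(X)$: by the defining property of the regularity interval together with Theorem~6, the multiples of $d'$ in $I$ are $X$-reachable, hence winning by the consequence of Proposition~1, so $I\cap d'\Z\subseteq\text{Win}$ and therefore $\cpre(I\cap d'\Z)\subseteq\text{Win}$ (from any point of the latter the reacher can force the round to end in Win). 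Adjoining to $X$ an element of $\cpre(I\cap d'\Z)$ thus preserves the invariant; since that element lies outside $d'\Z$, $\gcd(X)$ strictly decreases, and as it stays $\ge\gcd(\text{Win})\ge 1$ the loop must halt. Finally, whenever $\gcd(X)>\gcd(\text{Win})$, Corollary~13 gives $\cpre(I\cap d'\Z)\setminus d'\Z\ne\emptyset$, so the loop cannot exit; hence it exits precisely when $\gcd(X)=\gcd(\text{Win})$, and then $d=\gcd(\text{Win})$ with $X\subseteq\text{Win}$.

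\textbf{Step~2.} Since $\gcd(\text{Win})=d$ we have $\text{Win}\subseteq d\Z$. If $X$ contains both a positive and a negative value, then so does Win, and by Lemma~7 the $\N$-closure of $X$ — which lies in Win by Proposition~1 — is all of $d\Z$; with $\text{Win}\subseteq d\Z$ this forces $\text{Win}=d\Z$, as returned. Otherwise $X$ lies on one side, say $X\subseteq\N$ (the case $X\subseteq-\N$ being symmetric). The algorithm then performs a secondary test, $-\N\cap\cpre(\text{Ampl}(U,V)\cap d\N)\ne\emptyset$, detecting whether Win is nevertheless $d\Z$ (once more via Lemma~7 applied to winning values of opposite signs). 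If the test fails, one argues that no negative counter value is winning; then, by Theorem~6, every multiple of $d$ exceeding $b=\tilde F(X)$ is $X$-reachable, hence winning, while Win contains no non-multiple of $d$, so the only counter values whose status is still undetermined lie in $\llbracket 0,b\rrbracket$. This is exactly the situation of Proposition~5, so $\restrattr(\text{Restr}^b_d(U,V))$ equals $\text{Win}\cap\llbracket 0,b\rrbracket$; together with $Unbd=\{x\in d\Z\mid x>b\}$, the set of known winning multiples, this is Win.

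\textbf{Running time, and the main obstacle.} The cost is dominated by Step~2, which builds $\text{Restr}^b_d(U,V)$ — an arena with $O(b)$ vertices and $O(b\,(|U|+|V|))$ edges — and computes an attractor on it by iterating $\attrone$; the attractor construction on a finite arena runs in time polynomial in its size, so Step~2 runs in time polynomial in $b$, $|U|$ and $|V|$. As $b=\tilde F(X)\le\max(X)^2$, everything reduces to bounding $\max(X)$, the largest integer occurring during Step~1, by a single exponential in the input size. Granted that, Step~1 is itself polynomial time: the loop runs at most $O(\log\gcd(\cpre(\{0\})))$ times — one step per element of a chain of proper divisors of $\gcd(\cpre(\{0\}))$ — which is polynomial because every element of $\cpre(\{0\})$ has absolute value at most $\max\{|u|:u\in U\}+\max\{|v|:v\in V\}$, and each iteration only manipulates the interval $I$ and a symbolic description of $\cpre$ of an arithmetic progression inside an interval, plus a few gcd computations, in time polynomial in the bit-lengths involved. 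The delicate point — and, I expect, the real work — is exactly this magnitude bound: across one Step~1 iteration the regularity interval $I_{(U,V)}(X)$ sits near $\tilde F(X)=\max(X)^2/\gcd(X)$, so the freshly adjoined element can be about the square of the current maximum, and one must show that over the polynomially many iterations this growth does not compound past a single exponential — controlling the interplay between the iteration count and the per-iteration blow-up, using that $\gcd(\cpre(\{0\}))$, which bounds the iteration count, is itself bounded by the input. The remaining ingredients — the Step~1 invariant, the Step~2 case analysis, the arena size, and the polynomial bound for the attractor on a finite graph — should be routine.
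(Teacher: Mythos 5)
Your proposal follows the paper's route essentially step for step: the Step~1 invariant $X\subseteq\text{Win}$ established by recurrence starting from $\cpre(\{0\})\cup\{0\}$, the use of Corollary~13 to show the loop exits exactly when $\gcd(X)=\gcd(\text{Win})$, the three-case analysis of Step~2 (opposite signs in $X$; the secondary $\cpre$ test; otherwise Theorem~6 plus the restricted arena of Proposition~5), and a complexity estimate of the form (iteration count)$\times$(per-iteration growth) followed by an attractor computation on $\text{Restr}^b_d(U,V)$. Two deferred points deserve attention. First, the claim that a failed secondary test implies that no negative counter value is winning is not immediate from Lemma~7: the paper proves that $-\N\cap\cpre(\text{Ampl}(U,V)\cap d\N)\neq\emptyset$ is \emph{equivalent} to Win containing values of both signs, with an explicit argument in each direction (for one direction, extracting from a winning play that starts at a negative value the last round ending in $-\N$; for the other, letting the reacher repeat a round-winning reply enough times to land on a large $X$-reachable multiple of $d$). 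You should supply that argument rather than gesture at it.

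Second, the magnitude bound you single out as ``the real work'' is indeed the crux, and the paper's own treatment is terse: it asserts that the representation of $\tilde F(X)$ has at most twice the size of that of $X$, that the loop iterates at most $\log_2$ of the initial gcd many times (hence at most linearly many times in the input bit-size, since that gcd is bounded by $\max|U|+\max|V|$), and that the arena has size linear in the \emph{value} of $b+\max(V)-\min(V)$, concluding exponential time under binary encoding. Your worry is legitimate: naively composing ``bit-length at most doubles per iteration'' with ``linearly many iterations in the input bit-size'' yields only a singly exponential bound on the bit-length of the elements of $X$, hence a doubly exponential bound on the value of $b$ and on the number of vertices of $\text{Restr}^b_d(U,V)$; the paper does not elaborate beyond the assertions above, so to close your gap you would need to sharpen either the per-iteration growth estimate or the iteration count. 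In short, your outline matches the paper's, but the one point you leave open is also the point the paper's proof passes over most quickly.
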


\begin{proof}[Proof of termination]
The only loop in the algorithm is in the first step.
Each iteration either lowers the variable $d'$, more precisely replaces it by one of its divisors,
or assigns the variable $d$ to the value of $d'$, which makes the loop stop because this value is positive.
The lowering of $d'$ occurs less times than the gcd of $\cpre(\{0\}) \cup \{0\}$,
and if this set is $\{0\}$, then the algorithm stops before the first step begins.
\end{proof}

\begin{proof}[Proof of correctness]
Let Win be the actual winning set, and let $d =$ gcd(Win).
\begin{itemize}
\item In the first step, the variable $X$ is a subset of Win. We prove it by recurrence:
\begin{itemize}
\item The step begins with $X = \cpre(\{0\}) \cup \{0\}$, which contains only winning counter values.
\item Let $X \subseteq$ Win, let $d' =$ gcd$(X)$.
In the loop, when a counter value $y$ is included in $X$,
it belongs to $\cpre(I \cap d'\Z)$, where $I$ is the regularity interval of $X$.
Thus the reacher has a move to go from $y$ to a subset of $X$-reachable counter values,
which justifies that $y$ is a winning counter value too.
\end{itemize}
\item On the other hand, if no element of Win$\setminus d\Z$
is found and included in $X$, then by Corollary~$13$, there exists none. 
It remains to look for elements of Win$\setminus \langle X\rangle_\N$, necessarily in $d\Z$.
\item We distinguish three cases to prove the second step.
\begin{itemize}
\item If two counter values in $X$ have opposite signs, then Win $= \langle X\rangle_\N = d\Z$ by Lemma~$7$. 
\item Else if $X \subseteq \N$ and $-\N\ \cap \cpre($Ampl$(U,V) \cap d\N) \not= \emptyset$,
then we also have Win $= d\Z$. Actually we here prove this equivalent to the fact that
two counter values in Win have opposite signs.

$(\Leftarrow)$ Let $x_0 \in -\N\ \cap $ Win.
Consider a play $\pi$ that starts at $x_0$ and in which the reacher uses a winning strategy.
The play $\pi$ ends in $0$ and every round finishes in winning counter values, i.e., multiples of $d$.
Let $x \in -\N$ be the counter value in which a round in $\pi$ ended and no more round ended in $-\N$ afterwards.
Whatever the opponent did, the reacher forced the round that began in $x$ to end in a nonnegative winning counter value.
To sum up, $x$ is a negative counter value in $\cpre($Ampl$(U,V) \cap d\N)$.

Note that Ampl$(U,V)$ necessarily contains negative counter values, else there would not be any positive winning counter value.

$(\Rightarrow)$ Let $x \in -\N\ \cap \cpre($Ampl$(U,V) \cap d\N)$.
In other words, for every opponent move, the reacher has a move such that a round that begins in $x$
ends in a positive multiple of $d$ in one round, and this multiple is less than $\max(U)+\max(V)$.
Consider the reacher move as the image of the opponent move by a function $\varphi : V \to U$.
If the reacher plays the image by $\varphi$ of the last opponent move $dk$ times,
for $k \in \N$ big enough, then a great multiple of $d$, i.e., a counter value in $\langle X\rangle_\N$, is reached.
This justifies that $x$ is winning.

If $X \subseteq -\N$ and $\N\ \cap \cpre($Ampl$(U,V) \cap -d\N) \not= \emptyset$,
we have the same result.

\item Else, we know that all counter values in Win have the same sign.
Suppose without loss of generality that $X\subseteq \N$.
From a negative counter value, only negative or positive but surely losing counter values can be visited,
therefore Win is included in $d\N$.
We use Theorem~$6$: every counter value above $\tilde{F}(X)$ is winning. Between $0$ and $\tilde{F}(X)$, 
we decide the winner using the result of Proposition~$5$ about the attractor on the restricted arena. 

\end{itemize}
\end{itemize}
\end{proof}

\begin{proof}[Proof of complexity]
We consider the input size as $\sum_{w \in U \cup V} \log(|w|)$.
The algorithm first computes the set $\cpre(\{0\})$, which contains at most $|U|$ counter values, 
all of them have a lower size than the input size.

Let us consider the loop in the first step of the algorithm.
For any subset $X$ of $\Z$, let $d'$ be the gcd of $X$ and let $I$ be the regularity interval of $X$.
The size of $I$ is $2$ gcd$(X) + \min(U) + \max(U) + \min(V) + \max(V)$, one of its bounds is $\tilde{F}(X)$,
and the size of a representation of this integer is at most twice the size of $X$.
The size of the counter value $y$ obtained in the loop using $\cpre$ on $I \cap d'\Z$
is bounded by a polynom in the size of the integers in $X$.
There is a logarithmic number of iterations in the loop, because each assignment of $d'$
sets it to one of its strict divisors.

We now look at the second step of the algorithm.
It first checks whether two integers in $X$ have opposite signs,
and in case of fail makes another test on the $\cpre$ of an interval included in the amplitude of the game.
This can be done in polynomial time.
If the second test fails, then the bound $b := \tilde{F}(X)$ is computed,
the arena $G :=$ Restr$^b_d(U,V)$ is built and the reachability game on $G$ is solved with the computation of an attractor,
for a time complexity polynomial in the size of $G$.
This size is linear in the value of $b + \max(V) - \min(V)$.
With a binary encoding, the algorithm uses then exponential time.
\end{proof}

Let us illustrate the second step of the algorithm with the example in Figure~$1$ again.
We exit the first step with a subset $X = \{-2,-3\}$ of the winning set such that gcd$(X) = 1$.
Because $1 = -\min(U) < \max(V) = 3$, the opponent wins from any positive counter value (Proposition~$2$), 
it is indeed impossible that $\cpre(\{-2, -1, 0\})$ contains any positive counter value.
Every nonpositive $\{-2,-3\}$-reachable counter value, i.e., every nonpositive counter value but $-1$, is winning.
We only have to decide whether the reacher wins from $-1$, and it is not the case because the opponent can play $3$ every time,
which guarantees that only positive counter values are visited after the first move.
The algorithm decides it when it calls $\restrattr$ on the arena Restr$^{-1}_1(U,V)$.

\subsection{The lower complexity bound}

We are now showing EXPTIME-hardness of robot games. In order to do this, we give the definition of countdown games \cite{JLS07},
which are games with one positive and strictly decreasing counter.
We then introduce a variant of countdown games and show two successive reductions from countdown games to our variant and then from this variant to robot games.

A \kw{countdown game} between two players $1$ and $2$ is represented by a pair $((S,T),c_0)$
where $S$ is a finite set of locations, $T \subseteq S \times (\N \setminus \{0\}) \times S$ is a set of weighted transitions and $c_0 \in \N \setminus \{0\}$.
We consider that $S$ has a particular location~$s_0$.
Configurations in the game are pairs $(s,c) \in S \times \N$, where $c$ is a counter value.
A play is a sequence of moves, done in the following way: from a configuration $(s,c)$, initially $(s_0,c_0)$,
player~$1$ chooses a value $d \le c$ called duration such that there exists a transition in $T$
with $s$ as first component and $d$ as second component, then player~$2$ chooses $(s,d,s')$ among these transitions.
This move updates the configuration to $(s',c-d)$.

The winner of a play in a countdown game is determined once the play is blocked.
Because only nonnegative integers appear in the configurations and positive integers in the transitions,
the game is finite and ends when player~$1$ cannot find any duration to make a move.
At this point, player~$1$ wins if, and only if, the counter value is~$0$.
Deciding the winner in countdown games is EXPTIME-complete. \cite{JLS07}

We now define \kw{restricted countdown games}: On the one hand,
the winning condition for player~$1$ is now that the play ends in $(\bot,0)$ for a particular sink $\bot \in S$,
i.e., there are no transitions with $\bot$ as first component;
on the other hand, if there are two transitions $(s_1,d,s'_1)$, $(s_2,d,s'_2)$ in $T$, then $s_1 = s_2$;
in other words, a duration is specific to a location.

\label{prop6}\begin{proposition}
Countdown games reduce in polynomial time to restricted countdown games.
\end{proposition}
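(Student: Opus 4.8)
The plan is to show that any countdown game $((S,T),c_0)$ can be transformed, with only polynomial blow-up, into a restricted countdown game that player~$1$ wins if and only if player~$1$ wins the original game. There are two features to install: a unique winning sink $\bot$ reached exactly at counter value $0$, and the property that each duration $d$ occurs on transitions out of a single location only. I would handle these one at a time, in that order.

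First I would address the winning condition. In a plain countdown game player~$1$ wins whenever the play blocks at counter value $0$, regardless of the current location; I want to funnel all such winning situations through one designated sink $\bot$. To do this I would add a fresh location $\bot$ with no outgoing transitions, and for every location $s \in S$ add a transition that player~$1$ can take precisely when it is beneficial, landing in $\bot$. The natural choice is to add, for each $s$, a transition $(s, c_0, s')$-style gadget — but since durations must eventually be made location-specific anyway, a cleaner route is: keep the original transitions, and additionally give each location a ``drain'' path to $\bot$ whose total weight can be tuned. Concretely, one can have player~$1$, from any $s$ with current counter $c$, move toward $\bot$ only via genuine transitions of $T$; the point is simply that the original ``win iff blocked at $0$'' is equivalent to ``win iff one can reach $(\bot,0)$'' once we make $\bot$ the unique dead end, because from a non-$\bot$ location that is blocked the counter is nonzero exactly when player~$1$ has already lost. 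So the first transformation essentially renames the acceptance condition and is almost free; I would spell out that a configuration $(s,0)$ with $s$ blocked corresponds to a loss, matching both semantics.

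The substantive step is making durations location-specific. If a duration $d$ labels transitions out of several locations $s_1,\dots,s_m$, I would replace each such transition $(s_i,d,s'_i)$ by a two-step gadget: player~$1$ first plays a small fresh duration $\delta_i$ (distinct for each $i$ and unused elsewhere) from $s_i$ into a fresh intermediate location $t_i$, and then from $t_i$ player~$2$ continues with the ``rest'' of the weight. To keep the arithmetic honest one multiplies all original weights and $c_0$ by a common factor $K$ large enough that the small offsets $\delta_i$ fit, so that a step of original weight $d$ becomes $\delta_i$ followed by $Kd - \delta_i$, and the fresh durations $\delta_i$ and the residual durations $Kd-\delta_i$ are pairwise distinct and each tied to a unique location by construction. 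One must check: (i) the total weight consumed along the gadget equals $Kd$, so counter values stay in lockstep with $K$ times the original ones; (ii) the branching structure is preserved — player~$1$'s choice of which original move to make becomes player~$1$'s choice of $\delta_i$, and player~$2$'s choice among transitions sharing $(s,d)$ is relocated to the choice out of $t_i$ but is otherwise unchanged; (iii) no new way to get stuck at a nonzero counter is introduced, which holds because the intermediate locations $t_i$ always have the single outgoing bundle available whenever the gadget was entered with enough counter, i.e., when the original move was legal. Finally I would note the size bound: we add at most $|T|$ intermediate locations and at most $2|T|$ transitions, and $K$ together with the new weights is polynomial in the input, so the reduction runs in polynomial time.

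The main obstacle I anticipate is item (ii)/(iii): getting the gadget to faithfully transfer the alternation between players without accidentally handing a player an extra option or a spurious dead end, and simultaneously guaranteeing the freshly introduced durations are genuinely unique to their source locations (and do not clash with residual durations $Kd - \delta_i$ coming from other $d$). This is a bookkeeping argument rather than a deep one, but it is where a careless construction breaks, so I would set up the offsets $\delta_i$ explicitly as, say, distinct values in a reserved low range $\llbracket 1, N\rrbracket$ with $K > 2N \cdot \max(\text{weights})$, and verify disjointness of all duration sets by that numeric separation.
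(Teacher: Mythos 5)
Your second step is essentially the paper's own construction: multiply all weights and the initial counter by a large factor $K$ and split each transition into a small ``identifying'' hop through a fresh intermediate location followed by the residual weight, with the offsets numerically separated so that every duration occurs at a unique source location. That part is sound, modulo the bookkeeping you already flag (one should also observe that a player who enters a gadget corresponding to an original move that is not legal merely strands himself at a positive counter, so faithfulness is preserved).

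The genuine gap is in your first step. As stated, the equivalence between ``player~$1$ wins iff the play blocks at counter $0$'' and ``player~$1$ wins iff the play reaches $(\bot,0)$'' does not hold: you never specify the duration on the drain transitions into $\bot$, and any positive duration $w$ can only be taken when the current counter is at least $w$, landing in $(\bot,0)$ exactly when it is taken at counter $w$ --- while at counter $0$ no transition is playable at all. Your remark that a blocked configuration $(s,0)$ ``corresponds to a loss, matching both semantics'' is backwards: in the original game a blocked play at counter $0$ is precisely a \emph{win} for player~$1$, whereas in the modified game blocking anywhere other than $\bot$ is a loss, so the naive renaming flips the winner on exactly the relevant configurations. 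The missing idea, which the paper supplies, is to choose a duration $d'$ occurring nowhere in $T$, add transitions $(s,d',\bot)$ for every $s$, and shift the initial counter from $c_0$ to $c_0+d'$. Then $(s,x)$ in $G$ corresponds to $(s,x+d')$ in $G'$; reaching counter $0$ in $G$ corresponds to reaching counter $d'$ in $G'$, from where player~$1$ plays $d'$ and, because $d'$ is fresh, player~$2$ is forced into $\bot$ at counter exactly $0$. Both ingredients are essential: freshness of $d'$ prevents player~$2$ from diverting the drain move, and the counter offset is what makes the drain land on $0$ at the right moment (taking it earlier leaves a positive residue at the sink, so it never helps player~$1$). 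Without this mechanism your first transformation does not establish the reduction.
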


\begin{proof}
There are two steps in the construction.
Consider an arbitrary countdown game $G = ((S,T),c)$, where $S = \{s_0, \dots, s_{n-1}\}$.
Let $d'$ be the least positive integer that does not appear in any transition in $T$.
First, we build the countdown game $G' = ((S \cup \{\bot\}),T \cup \{(s,d',\bot)\ |\ s \in S\}, c+d')$,
The winning condition for player~$1$ in $G'$ is to reach $(\bot,0)$.

Notice that player~$1$ wins a play $\pi$ in $G'$ if, and only if, in the last move of $\pi$, the configuration is $(s,d')$
for any location $s$, and player~$1$ chooses $d'$, in order that player~$2$ can only pick the transition $(s,d',\bot)$.
The partial play from $(s_0,c+d')$ to $(s,d')$,
corresponds in $G$ to a play that starts at $(s_0,c)$ and ends in $(s,0)$, where player~$1$ wins.

Second, we build from $G'$ a restricted countdown game that we prove equivalent to $G$.
Let $G'' = ((S \cup S' \cup \{\bot\},T''),2N(c+d'))$, where $S' = \{s_1',\dots,s_{N-1}'\}$ and
$T'' = \{(s_0,2nd,s)\ |\ (s_0,d,s) \in T'\} \cup \{(s_i,i,s_i'), (s_i',2nd-i,s_j)\ |\ (s_i,d,s_j) \in T'\}$.
Matching transitions are $(s_0,2nd,s) \in T''$ and $(s_0,d,s) \in T'$,
as well as $(s_i',2nd-i,s_j) \in T''$ and $(s_i,d,s_j) \in T'$.
Matching plays are $\pi'' \in T''$ and $\pi' \in T'$ such that, when we exclude the moves from $s \in S$ to $s'$ in $\pi''$,
the transitions of every move in $\pi''$ and in $\pi'$ match.

The game $G''$ is a restricted countdown game
because the duration of a transition that starts in $s_i \in S$ is a multiple of $2N$ plus $i$
and the duration of a transition that starts in $s_i' \in S'$ is a multiple of $2N$ minus $i$.

Moreover, player~$1$ wins in $G''$ if, and only if, he wins in $G'$, thus in $G$.
Indeed, consider matching plays $\pi''$ of $G''$ and $\pi'$ of $G'$.
When the location is in $S \cup \{\bot\}$, the counter value in $\pi''$ is $2n$ times the counter value in $\pi'$,
because at the beginning of $\pi''$ the configuration is $(s_0,2n(c_0+d'))$
and at the beginning of $\pi'$ the configuration is $(s_0,c_0+d')$.

Hence, a play in $G''$ reaches $(\bot,0)$ if, and only if, the matching play in $G'$ reaches $(\bot,0)$.
\end{proof}

\label{thm4}\begin{theorem}
Given a robot game on $\Z$ and an initial counter value,
deciding whether the reacher has a winning strategy from this counter value is EXPTIME-hard.
\end{theorem}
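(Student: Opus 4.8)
The plan is to reduce restricted countdown games to robot games, which by Proposition~$10$ (the reduction from ordinary countdown games) suffices to establish EXPTIME-hardness. So fix a restricted countdown game $G = ((S,T),c_0)$ with sink $\bot$, where $S = \{s_0,\dots,s_{n-1}\}$ and $s_0$ is the start location; the key structural feature we exploit is that every duration $d$ determines its source location, so from a counter value alone the ``location'' is recoverable once we encode it suitably. The idea is to encode a configuration $(s_i,c)$ of $G$ as a single integer of the form (roughly) $c \cdot M + \rho(i)$ for a large modulus $M$ and distinct residues $\rho(i)$, and to let the reacher play the role of player~$1$ (choosing a duration) while the opponent plays player~$2$ (choosing a transition with that duration). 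The reacher wins the robot game exactly when the encoded play reaches $(\bot,0)$, i.e. the integer $0$.

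Concretely, I would set $M$ to be a power of two larger than any duration appearing in $T$, reserve the high-order bits of the counter for the ``scaled counter value'' $c$ and the low-order bits for the residue identifying the current location. A single robot-game round should simulate one move of $G$: in the first half of the round the opponent is forced — by a gadget analogous to the one in the \textsc{Subset-Sum} reduction of Theorem~$3$, using dummy moves that throw the counter into a region from which the reacher can never recover — to commit to a transition $(s_i,d,s_j)$ of $T$, which subtracts $d\cdot M$ from the counter part and rewrites the residue from $\rho(i)$ to $\rho(j)$; in the second half the reacher selects, among the opponent's offered transitions sharing the duration the reacher implicitly ``chose'', but since in a restricted countdown game the duration pins down the source location, the cleaner design is: the reacher first announces a duration $d$ by a move, the opponent then picks the actual successor among transitions $(s_i,d,s_j)$. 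I would make the reacher's set $U$ contain, for each duration $d$ usable from the current location, a move that subtracts $dM$ and shifts residues into an ``intermediate'' zone, and the opponent's set $V$ contain, for each transition, a completing move that lands on the residue of the true successor; mismatched combinations (wrong duration from wrong location, or the counter going negative, which corresponds to player~$1$ picking $d > c$) are sent by extra gadget moves to a permanently losing region, exactly as $\bot_{<0}$ works in Proposition~$5$. Termination of the simulated play — player~$1$ loses a countdown game when stuck with a nonzero counter — is handled by ensuring that whenever the counter part is $0$ but the residue is not $\rho(\bot)$, or whenever no transition is available, the position is losing; the reacher wins only by reaching integer $0$, which codes $(\bot,0)$.

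The key steps, in order, are: (i) define the encoding of configurations as integers and fix the modulus $M$, residues $\rho(i)$, and the initial counter value $x_0$ coding $(s_0,c_0)$; (ii) define $U$ and $V$, including the ``trap'' moves that punish any deviation from a faithful simulation, and verify the sizes of $U$, $V$, and the bit-length of $x_0$ are polynomial in the size of $G$; (iii) prove the simulation is faithful, i.e. establish a bijection between plays in $G$ respecting the turn order and plays in the robot game in which neither player has ``cheated'', and show that cheating is always immediately punished — a reacher cheat leads to a losing sink, an opponent cheat to a winning one; (iv) conclude that player~$1$ wins $G$ from $(s_0,c_0)$ if and only if the reacher wins the robot game from $x_0$, invoking positional determinacy so that strategies transfer in both directions; (v) combine with Proposition~$10$ and the EXPTIME-completeness of countdown games to get EXPTIME-hardness.

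The main obstacle I anticipate is engineering the trap gadgets so that the turn structure of the robot game (opponent moves first in each round) correctly mirrors the turn structure of the restricted countdown game (player~$1$ chooses a duration, then player~$2$ chooses a transition), while simultaneously forcing both players to stay on the ``honest'' encoding. In particular, since in the robot game the opponent moves before the reacher within a round, but in the countdown game player~$1$ (the reacher) should commit to the duration before player~$2$ (the opponent) commits to the transition, I will likely need each countdown move to span parts of two consecutive robot-game rounds, with the residue bits carrying enough state to remember the pending duration; getting the bookkeeping right so that a player who breaks the protocol is provably doomed — rather than merely disadvantaged — is the delicate part, and it is where the restricted-countdown-game property (duration determines source location) does the essential work, because it lets the single integer encode the location implicitly without needing extra state that the opponent could corrupt.
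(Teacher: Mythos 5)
Your high-level plan coincides with the paper's: reduce from restricted countdown games (the paper's Proposition~$15$, not~$10$), encode a configuration $(s,c)$ as an integer with separate ``parts'' for the counter value and the location, let the reacher act as player~$1$ choosing durations and the opponent as player~$2$ choosing successors, and punish deviations. The gap is that essentially all of the difficulty of this reduction lives inside the ``trap gadgets'' that you leave unconstructed, and the analogy you invoke does not supply them. In a robot game there are no sinks and no way to disable moves: every element of $V$ and of $U$ is playable at every counter value, so you cannot ``send mismatched combinations to a permanently losing region'' by fiat, and the \textsc{Subset-Sum} gadget of Theorem~$3$ is of no help because it is a one-player construction with no adversary to cheat. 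What must actually be proved is that after \emph{any} deviation by either player, the wronged player has a strategy that provably wins: the reacher must still be able to drive the counter to exactly $0$, and the opponent must be able to keep it away from $0$ forever. The paper achieves this with a one-hot base-$4$ encoding (one digit per duration, one per location, plus a dedicated control block), explicit \cancel/\remove\ and \cancel/\erase\ moves for the reacher, and digit invariants (Lemma~$18$) showing that a deviation leaves some digit at $3$ which the adversary can perpetually maintain or exploit without carries. Your packed encoding $c\cdot M+\rho(i)$ undermines exactly this mechanism: residues of different locations occupy the same arithmetic range and can interfere, so it is unclear how the opponent certifies that $0$ is unreachable after a reacher cheat, or how the reacher recovers and still hits $0$ exactly after an opponent cheat.

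Two further points you would need to resolve. First, termination: even in an honest simulation the encoding never spontaneously becomes $0$ (there is always a residual ``pending'' digit between half-moves), so the reacher needs a designated terminal move --- the paper's \finish\ --- together with a control part that can be spent only once, to prevent the reacher from finishing prematurely or repeatedly; nothing in your construction plays that role. Second, the turn-order mismatch is real but is the easy part: the paper simply lets the reacher move first (equivalently, pads with a dummy opponent move), which is considerably less delicate than threading a countdown move across two robot-game rounds with state carried in residue bits, and your two-round scheme would itself need the same unproven punishment analysis.
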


We prove the theorem by a reduction from restricted countdown games to unidimensional robot games.
Let $((S,T),c)$ be a restricted countdown game, where we suppose without loss of generality that $S = \llbracket 0,n-1\rrbracket,
s_0 = 0$ and $\bot = n-1$. We name $D = \{d_0,\dots,d_{h-1}\}$ the set of values that appear in $T$ as durations.
Let $k = \lfloor\log_4(c)\rfloor+1$ and $k' = \lfloor\log_4(k)\rfloor+1$.

We write the counter value in the robot game in base $4$
and with $h+n+k+k'+1$ digits, which we split in four parts.
These parts encode \kw{durations}, \kw{locations}, \kw{value} in the countdown game and \kw{controls}
from the least significant digit to the most significant one.
We explain these notions after the presentation of the sets of moves $U$ and $V$.
The initial counter value is $4^h + c_0 \cdot 4^{h+n} + k \cdot 4^{h+n+k} + 4^{h+n+k+k'}$,
it corresponds to no duration given, the initial state, the value $c_0$ in the countdown game, and a default value for the control part.

Let us use an example to see how we represent a configuration in a restricted countdown game
by the counter value in the robot game.
Consider the countdown game pictured in Figure~$2$.
We represent in Figure~$3$ the first move of a play that starts at~$0$ with value $8$ (first line)
and where player~$1$ chooses duration $3$ (second line), after which player~$2$ moves to the location~$1$ (third line).
We set $D = \{1,2,3,4,5,6\}$.

\begin{table}
\newcolumntype{A}{m{0.45\textwidth}}
\begin{tabular}{AA}
\begin{center}
\begin{tikzpicture}[->,>=stealth',shorten >=1pt,auto,node distance=2.8cm,
                    semithick]
  \tikzstyle{state}=[circle,minimum size=1cm,fill=white,draw=black,text=black]

  \node[state] (A)                    {$0$};
  \node[state] (B) [right of=A]       {$1$};
  \node[state] (C) [below of=A]       {$2$};
  \node[state] (D) [right of=C]       {$\bot$};
  
  \path (A) edge [loop left] node {$6$} (A)
            edge [bend left] node {$3$} (B)
            edge             node {$3$} (C)
        (B) edge             node {$2$} (A)
            edge [bend left] node {$2$} (C)
            edge             node {$1$} (D)
        (C) edge [bend left] node {$4$} (A)
            edge             node {$4$} (B)
            edge    [swap]   node {$5$} (D);

\end{tikzpicture}
\end{center}
&
\begin{center}
\begin{tabular}{| c | c | c | c |}
\hline
$\overbrace{0\ 0\ 0\ 0\ 0\ 0}^{\text{duration}}$ & $\overbrace{1\ 0\ 0\ 0}^{\text{location}}$
& $\overbrace{0\ \ \ \ \ \ \ \ \ 2}^{\text{countdown}}$ & $\overbrace{2\ \ \ \ \ \ \ \ \ 1}^{\text{control}}$\\
\hline
\end{tabular}
\\
\vspace{\baselineskip}
\begin{tabular}{| c | c | c | c |}
\hline
$0\ 0\ 1\ 0\ 0\ 0$ & $0\ 0\ 0\ 0$ & $1\ \ \ \ \ \ \ \ \ 1$ & $2\ \ \ \ \ \ \ \ \ 1$\\
\hline
\end{tabular}
\\
\vspace{\baselineskip}
\begin{tabular}{| c | c | c | c |}
\hline
$0\ 0\ 0\ 0\ 0\ 0$ & $0\ 1\ 0\ 0$ & $1\ \ \ \ \ \ \ \ \ 1$ & $2\ \ \ \ \ \ \ \ \ 1$\\
\hline
\end{tabular}
\end{center}
\\
\captionof{figure}{Restricted countdown game.}
&
\captionof{figure}{Counter values in the robot game.}
\end{tabular}
\end{table}

For simplicity, we decide that the reacher begins in the robot game,
but the winning condition is still that the counter value becomes $0$ after the turn of the reacher.
It remains computationally equivalent.

We first give the moves of both players in the robot games as codes,
in order to explain the way we encode a play in the restricted countdown game.
Intuitively, because we split the counter in four parts, a risk appears that the encoding
no longer corresponds to a configuration in the reduced countdown game, for example because of a carry.
We prove that if a player tries to create such a bad behaviour,
then the other one can react with a winning strategy.

In a restricted countdown game, let us write $s_d$ for the location uniquely determined by a duration $d$.

The codes for the set of opponent moves are $\{$\duration\ $d$ \goto\ $s'\ \mid\ (s_d,d,s') \in T \}$,
which correspond to the choice by player~$2$ of the next location according to the given duration.

The codes for the set of reacher moves are $\{$\state\ $s$ \codechoose\ $d\ \mid\ \exists s', (s,d,s') \in T \}$,
which correspond to the choice by player~$1$ of an available duration;
$\{$\finish$\}$, played when the winning configuration is reached;
$\{$\cancel\ $(d,s')$ \erase\ $(j,a)\ \mid\ (s_d,d,s') \in T,\ 0 \le j < k,\ 0 \le a \le 3\}$,
to modify the third and fourth part of the counter value and eventually reach~$0$ when it seems that the opponent cheated;
and $\{$\cancel\ $(d,s')$ \remove\ $d'\ \mid\ (s_d,d,s') \in T,\ d \not= d'\}$,
to point out a cheating from the opponent, i.e., cancel the last move and subtract the real duration that was chosen.
When we do not specify the parameters like $s$ and $d$ in the codes, we write the type of the moves,
for example \state/\codechoose.

We call \kw{good encoding} a sequence that alternates reacher and opponent moves such that:
\begin{itemize}
\item The first move is \state\ $s_0$ \codechoose\ $d$ for a certain $d$.
\item The last move is \finish\ and the move before is \duration\ $d$ \goto\ $\bot$ for a certain $d$.
\item There are neither \cancel/\erase\ nor \cancel/\remove\ nor other \finish\ moves.
\item For two consecutive moves \state\ $s$ \codechoose\ $d$ and \duration\ $d'$ \goto\ $s'$, we have $d$ = $d'$.
\item For two consecutive moves \duration\ $d$ \goto\ $s$ and \state\ $s'$ \codechoose\ $d'$, we have $s$ = $s'$.
\end{itemize}
All sequences that are not the prefix of a good encoding and that have no good encoding as a prefix are \kw{bad encodings},
and the first move that refutes in this case the first or one of the three last properties of a good encoding is called \kw{deviating move}.
A sequence that only refutes the second property is neither a good nor a bad encoding,
and we deal separately with sequences that continue after a finish move.

Note that for each play in the restricted countdown game the players have the possibility in the robot game
to build with their moves a good encoding and the codes of this good encoding trace the play.
The hard part is to handle bad encodings.
To understand how it can be done, let us give the integers that correspond to each code.

\begin{itemize}
\item \duration\ $d_i$ \goto\ $s'$ == $-4^{i} + 4^{h+s'}$;
\item \state\ $s$ \codechoose\ $d_i$ == $4^{i} - 4^{h+s} - d_i \cdot 4^{h+n}$;
\item \finish\ == $-4^{h+n-1} - k \cdot 4^{h+n+k} - 4^{h+n+k+k'}$;
\item \cancel\ $(d_i,s')$ \erase\ $(j,a)$ == $-$(\duration\ $d_i$ \goto\ $s'$)$ - a \cdot 4^{h+n+j} - 4^{h+n+k}$;
\item \cancel\ $(d_i,s')$ \remove\ $d_j$ == $-$(\duration\ $d_i$ \goto\ $s'$)$ - 4^{j} - 4^{h+n+k+k'}$.
\end{itemize}

It appears that every opponent move is positive and every reacher move is negative.
Moreover, any opponent move plus any reacher move is negative,
hence, if the counter value becomes negative, the opponent wins.

The next proposition shows the need for both players to build good encodings.

\label{prop7}\begin{proposition}
If a sequence is a bad encoding, then the adversary of the player who has played the deviating move
has a winning strategy from this move onwards.
\end{proposition}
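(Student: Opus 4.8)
The plan is to analyze, for each of the four properties of a good encoding whose violation can trigger a deviating move, who is harmed by the deviation and to exhibit an explicit memoryless winning strategy for the adversary. Recall that by the earlier observation every opponent move is positive, every reacher move is negative, and the sum of any opponent move and any reacher move is negative; hence the opponent wins the instant the counter goes negative, and this will be the engine behind most of the opponent's punishing strategies. The key structural fact I would use is the partition of the counter into four blocks of base-$4$ digits (durations, locations, value, controls); a deviating move is precisely a move that writes a digit where it should not, or fails to cancel a digit it should have cancelled, and I will track which block is corrupted.

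I would organize the proof by cases according to which property is refuted by the deviating move. \emph{(i) The first move is not \state\ $s_0$ \codechoose\ $d$.} Any legal reacher first move other than one of the form \state\ $s_0$ \codechoose\ $d$ either is a \finish, \cancel/\erase\ or \cancel/\remove\ move (which subtracts a control digit that is not there, or touches the value/location blocks wrongly, leaving a digit pattern the reacher can never repair, so the opponent just plays to keep the counter away from $0$ and in fact drives it negative), or is \state\ $s$ \codechoose\ $d$ with $s \neq s_0$ (which corrupts the location block relative to what the initial value prescribed). \emph{(ii) Two consecutive moves \state\ $s$ \codechoose\ $d$ then \duration\ $d'$ \goto\ $s'$ with $d \neq d'$.} Here the \emph{opponent} deviated; the reacher answers with \cancel\ $(d',s')$ \remove\ $d'$ — wait, more precisely with the \cancel\ $(d,\cdot)$ \remove\ $d'$ move keyed to the duration $d$ the reacher actually announced — which by the displayed formula adds $-(\duration\ d\ \goto\ s') - 4^{d'} - 4^{h+n+k+k'}$, exactly undoing the opponent's bogus transition, subtracting the true duration digit $4^{d'}$ from the duration block (wait: it subtracts $4^j$ where $d_j$ is the real duration), and it is then the reacher who continues to steer toward a good encoding of an actual countdown play; I must check this leaves a well-formed encoding from which the reacher wins iff he would have in the sub-play. \emph{(iii) Two consecutive moves \duration\ $d$ \goto\ $s$ then \state\ $s'$ \codechoose\ $d'$ with $s \neq s'$.} Now the \emph{reacher} deviated by moving as if in a location different from the one player~$2$ chose; the opponent punishes by choosing a duration $d''$ legal from the true location $s$ but, by the restrictedness of the game, \emph{not} legal from $s'$, so that on the reacher's next turn no \state\ $s'$ \codechoose\ move matches and any move the reacher makes either corrupts a digit irreparably or sends the counter negative — this is exactly where the hypothesis that durations are location-specific is used. \emph{(iv) A \cancel/\erase\ or \cancel/\remove\ or extra \finish\ move appears.} A spurious \cancel/\erase\ subtracts a control digit $4^{h+n+k}$ and some value digits $a\cdot 4^{h+n+j}$ that were not matched by a preceding opponent cheat, so the counter's value/control blocks no longer agree with a legitimate countdown configuration; the opponent then plays \duration/\goto\ moves forever (or whatever keeps the value block bounded away from $0$) and, since the reacher has exhausted his one-shot correction tokens or created a digit he cannot remove, the opponent keeps the counter nonzero and eventually negative.

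For each case the skeleton is the same: (a) identify which digit of which block the deviating move corrupts, or which digit a required move failed to produce; (b) write down the adversary's memoryless reply as a specific code (or family of codes parametrized by the adversary's private choice), using the explicit integer values; (c) verify by a short digit-by-digit computation that, from that point on, either the counter is forced negative in finitely many rounds, or it is forced into a residue class modulo a suitable power of $4$ from which $0$ is unreachable under the available moves. I would also handle the bookkeeping remark that a \emph{good} encoding of a real play already has the property that the counter reaches $0$ exactly when the countdown play reaches $(\bot,0)$, so that the ``no deviation'' branch is consistent with the claim.

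The main obstacle I expect is case (iii): ruling out that the reacher, after secretly switching locations, can nonetheless massage the four blocks back into a winning configuration by a clever combination of \cancel/\remove\ and \cancel/\erase\ moves. The argument must show that the reacher's limited repertoire of correction tokens — one duration digit removable at a time, control digit $4^{h+n+k}$ toggled per \erase, value digits only adjustable in the amounts $a\cdot 4^{h+n+j}$ with $0\le a\le 3$, $0\le j<k$ — simply cannot both fix the location block and keep the value/control blocks legal once the opponent has forced a duration illegal from the impersonated location; quantifying this carry-free "conservation law" across the blocks, and making precise that the opponent's choice in (iii) is always available by restrictedness, is the delicate part. The remaining cases are comparatively mechanical once the sign observation (opponent wins on going negative) and the block-corruption viewpoint are in place.
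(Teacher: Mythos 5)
Your overall framework --- case analysis on the deviating move, tracking which base-$4$ block is corrupted, and leaning on the sign observation that the counter going negative is fatal for the reacher --- matches the paper's. But the proposal leaves genuine gaps in the two cases that carry most of the weight, and in both the explicit punishing strategy you sketch is not the right one.

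First, in your case (ii) (opponent plays \duration\ $d'$ after the reacher announced $d \neq d'$): you say that after the reacher's \cancel/\remove\ reply ``it is then the reacher who continues to steer toward a good encoding,'' and that you must check he ``wins iff he would have in the sub-play.'' That would only give a \emph{conditional} win, whereas the proposition asserts the reacher wins \emph{outright} once the opponent deviates. The missing idea is that the \cancel/\remove\ reply zeroes the first two blocks entirely, after which the reacher abandons the simulation: he answers every subsequent opponent move with a \cancel/\erase\ move that cancels it exactly and additionally erases one chosen digit of the value block (and decrements the control block, which was sized to permit exactly this), driving the counter to $0$ unconditionally in at most $k$ further rounds. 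This is the whole point of the \erase\ parameters $(j,a)$ and of the control block.

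Second, in your case (iii) (reacher plays \state\ $s$ when the expected location was $s'$): your proposed opponent strategy of ``choosing a duration illegal from the impersonated location'' misreads the move structure --- the opponent's moves are transitions \duration\ $d$ \goto\ $\cdot$ keyed to the duration $d$ the reacher just announced, so the opponent does not get to pick an arbitrary duration. The actual argument is a digit invariant: the wrong \state\ $s$ move drops the $(h+s)$-th digit to $3$; thereafter the opponent plays a move with \goto\ $s$ exactly when that digit currently stands at $2$, so that it is $3$ after every opponent move and never $0$ after a reacher move, because no move in either set can increase a location digit or decrease one by two or more (even via carries). Hence the counter can never reach $0$. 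Your case (iv) likewise needs to be split: for a spurious \cancel/\remove\ the opponent keys his durations to the corrupted duration digit (same parity trick on the first block), while for a spurious \cancel/\erase\ the opponent simply plays as in a good encoding, since the depleted control block now makes \finish\ send the counter negative. Without these explicit strategies the proof does not go through.
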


\begin{proof}
Let us consider a bad encoding and every possibility for the deviating move:
\begin{itemize}
\item An opponent move \duration\ $d_i$ \goto\ $s'$ whereas the expected duration was $d_j$.

In this case, the counter value has the $i^{th}$ digit at $3$.
The reacher then has the occasion to play
\cancel\ $(d_i,s')$ \remove\ $d_j$, in order that the first two parts of the counter value become $0$.
From this point on, the reacher just has to cancel every further opponent move
and erase step by step every digit of the third part of the counter until the value $0$ is encountered.

This case is illustrated in the Figure~$4$. The first line corresponds to the second line in the Figure~$3$.
Imagine that the opponent plays the deviating move \duration\ $6$ \goto\ $0$ (second line).
The reacher can react with \cancel\ ($6$,$0$) \remove\ $3$ (third line), and then, whatever the opponent does
(fourth and sixth line), the reacher cancels every move and erases the first and second digits of the third part
(fifth and seventh line), which makes him win because the counter is~$0$.

\item A reacher move \state\ $s$ \codechoose\ $d_i$ whereas the expected location was $s'$.

In this case, the counter value has the $h+s^{th}$ digit at $3$.
Now, the opponent can take advantage of this error and play a move with \goto\ $s$ if, and only if,
the $h+s^{th}$ digit has been lowered to $2$ by the previous reacher move,
therefore this digit will always be $3$ after an opponent move and never $0$ again after a reacher move,
because none of them permits to increase a digit of the second part or to decrease it by $2$ or more,
even with carries. In particular, the counter value cannot become $0$.

\item A reacher \cancel/\remove\ move.

Here, the first part of the counter value had only digits at $0$ just before
because the opponent did not do a deviating move and now one digit is at $3$, let us say it is the $i^{th}$ one.
The opponent will always use moves with \duration\ $d_i$ when the $i^{th}$ digit is $3$ and other moves when it is $2$
such that this digit can no longer be put to $0$ after the reacher plays, therefore the opponent wins.

\item A reacher \cancel/\erase\ move.

Here, the fourth part of the counter value has been reduced, hence the move \finish,
which would lead to a negative counter value, should be avoided by the reacher.
In other words, the opponent just has to match the duration of his move to the one that the reacher chose right before,
like in a good encoding, to be sure that he wins.
Indeed, the only possibility for the reacher to win is now to use a \cancel/\remove\ move,
but he will lose whenever he does this meanwhile the first part of the counter value has only digits at $0$.

\end{itemize}
\end{proof}

\begin{figure}
\begin{center}
\begin{tabular}{| c | c | c | c || l |}
\hline
$0\ 0\ 1\ 0\ 0\ 0$ & $0\ 0\ 0\ 0$ & $1\ 1$ & $2\ 1$ & Expected duration: $3$\\
\hline

$0\ 0\ 1\ 0\ 0\ 3$ & $0\ 0\ 0\ 0$ & $1\ 1$ & $2\ 1$ & \duration\ $6$ \goto\ $0$\\
\hline

$0\ 0\ 0\ 0\ 0\ 0$ & $0\ 0\ 0\ 0$ & $1\ 1$ & $2\ 0$ & \cancel\ $(6,0)$ \remove\ $3$\\
\hline

$0\ 0\ 0\ 0\ 3\ 3$ & $3\ 3\ 3\ 0$ & $1\ 1$ & $2\ 0$ & \duration\ $5$ \goto\ $\bot$\\
\hline

$0\ 0\ 0\ 0\ 0\ 0$ & $0\ 0\ 0\ 0$ & $0\ 1$ & $1\ 0$ & \cancel\ $(5,\bot)$ \erase\ $(1,1)$\\
\hline

$0\ 0\ 0\ 0\ 3\ 3$ & $3\ 3\ 3\ 0$ & $0\ 1$ & $1\ 0$ & \duration\ $5$ \goto\ $\bot$\\
\hline

$0\ 0\ 0\ 0\ 0\ 0$ & $0\ 0\ 0\ 0$ & $0\ 0$ & $0\ 0$ & \cancel\ $(5,\bot)$ \erase\ $(2,1)$\\
\hline
\end{tabular}
\caption{Deviating move of the opponent and reaction of the reacher.}
\end{center}
\end{figure}

We now restrict to good encodings, for which the next proposition decides the winner
depending on the winner of the corresponding play in the countdown game. We first need the following lemma.

\label{lemma2}\begin{lemma}
Consider a prefix of a good encoding without any \finish\ move.
The following invariants hold:
\begin{itemize}
\item The digits in the first part of the counter value are all~$0$ after an opponent move and all~$0$ except one~$1$ after a reacher move.
\item The digits in the second part of the counter value are all~$0$ after a reacher move and all~$0$ except one~$1$ after an opponent move.
\end{itemize}
\end{lemma}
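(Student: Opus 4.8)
\textit{Proof plan.} The plan is to prove a slightly strengthened digit‑level invariant by induction on the length $m$ of the prefix. Since the reacher begins, the moves of a \finish-free prefix alternate $\state/\codechoose,\ \duration/\goto,\ \state/\codechoose,\dots$, with only these two move types occurring. Writing $x_0$ for the initial counter value and $x_j$ for the counter value after the $j$‑th move, I would prove: for even $j$ (including $j=0$, the ``after an opponent move'' case) the duration part of $x_j$ is all $0$ and the location part is all $0$ except a single $1$, sitting at the digit that records the location currently reached; for odd $j$ (``after a reacher move'') the duration part is all $0$ except a single $1$, at the digit recording the last chosen duration, and the location part is all $0$. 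Note that because the reacher moves first, ``after a reacher move'' corresponds exactly to odd indices, and the initial configuration already has the ``after an opponent move'' shape, so the induction is correctly seeded.

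For the base case $j=0$ one simply reads off the digits of $4^h + c_0\cdot 4^{h+n} + k\cdot 4^{h+n+k} + 4^{h+n+k+k'}$: the duration part is $0$ and the location part carries a single $1$ at the position of $s_0$. For the inductive step I would treat the two move types. If $w_j = \state\ s\ \codechoose\ d_i$ (so $j$ is odd), the induction hypothesis at $x_{j-1}$ gives duration all $0$ and a single $1$ in the location part; the good‑encoding conditions -- the first move is $\state\ s_0\ \codechoose\ \cdot$, and between a $\duration\ \cdot\ \goto\ s''$ move and the following $\state\ s'\ \codechoose\ \cdot$ move one has $s''=s'$ -- guarantee that this $1$ sits at digit $h+s$. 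Adding the value $4^i - 4^{h+s} - d_i\cdot 4^{h+n}$ then raises duration digit $i$ from $0$ to $1$ (carry‑free, since every duration digit was $0$), lowers location digit $h+s$ from $1$ to $0$ (borrow‑free), and, being a multiple of $4^{h+n}$, the term $-d_i\cdot 4^{h+n}$ leaves every digit below position $h+n$ -- in particular the whole duration and location parts -- otherwise untouched; this is precisely the post‑reacher form. The case $w_j = \duration\ d_i\ \goto\ s'$ is symmetric: the matching condition between $\state\ s\ \codechoose\ d$ and $\duration\ d'\ \goto\ s'$ forces $d=d'$, so the single $1$ of the duration part of $x_{j-1}$ is at digit $i$; adding $-4^i + 4^{h+s'}$ drops it to $0$ and raises location digit $h+s'$ from $0$ to $1$, again carry‑free, giving the post‑opponent form.

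I do not expect a genuine obstacle: the statement is a digit‑level invariant and the argument is essentially bookkeeping from the definitions of the moves and of a good encoding. The only two points requiring care are (i) lining up the parity, already handled by the choice of strengthened invariant, and (ii) checking that no carry or borrow crosses a part boundary -- the $\pm 4^i$ and $\pm 4^{h+s}$ terms act on digits that the hypothesis pins to $0$ or $1$, hence are carry‑free, and the $-d_i\cdot 4^{h+n}$ contribution is harmless for the first two parts because it is a multiple of $4^{h+n}$ (any effect of the value‑part subtraction stays at positions $\ge h+n$; in fact the nonzero control part keeps the high portion of the counter well above $d_i$, so no borrow even escapes the value part).
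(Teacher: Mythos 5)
Your proposal is correct and follows essentially the same route as the paper's proof: an induction along the alternating sequence of \state/\codechoose\ and \duration/\goto\ moves, seeded by the shape of the initial counter value, with the good-encoding matching conditions guaranteeing that each move erases the single $1$ left by the previous one. The paper states this much more tersely; your strengthened invariant (pinning down \emph{which} digit carries the $1$) and the explicit carry/borrow check are exactly the details the paper leaves implicit.
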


\begin{proof}
At the beginning of the play, before the first reacher move, there is one~$1$ and other digits are~$0$ in the second part,
and all the digits in the first part are~$0$. This can also be seen in the Figure~$3$.
Consider a good encoding. The following alternation happens:
reacher \state/\codechoose\ moves erase the~$1$ in the second part and increment a digit, hence a~$0$, in the first part;
and opponent moves erase the~$1$ that appeared in the first part and increment a~$0$, in the second part.
\end{proof}

\label{prop8}\begin{proposition}
Consider a good encoding $\pi'$ built from a play $\pi$ in the restricted countdown game.
If player~$1$ wins $\pi$, then the reacher wins any play 
that begins with the prefix $\pi'$ in the robot game when he plays the \finish\ move,
else the opponent has a winning strategy after the \finish\ move.
\end{proposition}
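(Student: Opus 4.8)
The plan is to track the counter value along the good encoding $\pi'$ and show it exactly mirrors the configuration of the corresponding play $\pi$ in the restricted countdown game, so that the winning condition in the robot game ($0$ after the reacher's \finish) translates to player~$1$ winning $\pi$. First I would fix notation: write the counter value split into its four parts (duration, location, countdown, control) and prove by induction along $\pi'$, using Lemma~$17$, that after each opponent move the state is encoded as ``all digits $0$ in the first part, exactly one $1$ in the second part at position $h+s$ where $(s,c)$ is the current configuration of $\pi$, the third part equal to $c$, and the fourth part at its default value $k \cdot 4^{h+n+k} + 4^{h+n+k+k'}$''; and symmetrically after a reacher \state/\codechoose\ move. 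The increments and decrements in the codes for \duration/\goto\ and \state/\codechoose\ are designed precisely so that: choosing duration $d_i$ subtracts $d_i \cdot 4^{h+n}$ from the countdown part (matching the decrement $c \mapsto c - d_i$), moves the $1$ from the location part into digit $i$ of the duration part; and the opponent's response moves that $1$ back out of the duration part and places a fresh $1$ at the new location digit. The key arithmetic fact to check is that these operations never produce carries across part boundaries — which is exactly what Lemma~$17$'s invariants guarantee, since the affected digits go only between $0$ and $1$ and the countdown part has enough digits ($k = \lfloor \log_4(c) \rfloor + 1$) to hold every value reached.

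Next I would analyze the final two moves of the good encoding. By definition of good encoding, the move before \finish\ is \duration\ $d$ \goto\ $\bot$, so after it the configuration of $\pi$ is $(\bot, c')$ for the terminal counter value $c'$, and by the invariant the robot counter reads: first part all $0$, second part a single $1$ at digit $h + (n-1)$ (since $\bot = n-1$), third part equal to $c'$, fourth part at default. Now the reacher plays \finish\ $= -4^{h+n-1} - k \cdot 4^{h+n+k} - 4^{h+n+k+k'}$, which cancels the $1$ in the location part and zeroes the control part. The resulting counter value is $0$ if and only if $c' = 0$, i.e. if and only if player~$1$ wins $\pi$. This proves the first half of the statement: if player~$1$ wins $\pi$, then playing \finish\ at that point makes the robot counter $0$ and the reacher wins.

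For the second half — if player~$1$ does not win $\pi$, the opponent wins after \finish\ — I would argue that when player~$1$ loses $\pi$, the terminal counter value $c'$ is positive, so after \finish\ the robot counter is $c' \cdot 4^{h+n}$, a positive value, with first two parts zero and the control part zeroed out. I then need to show the opponent has a winning strategy from this configuration. The point is that the reacher can no longer reach $0$: the control part is already at $0$, so any \cancel/\erase\ move would make a digit of the control part negative (hence the whole counter negative, and the opponent wins by the earlier observation that the counter going negative is an opponent win); a \cancel/\remove\ move requires the matching \duration/\goto\ to have just been played, but the opponent simply never plays such a move after \finish\, or if forced into some interaction, plays so the first part of the counter stays all $0$, and a \cancel/\remove\ played while the first part is all $0$ drives a digit negative. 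Any \state/\codechoose\ or \duration/\goto\ move only shuffles the small digits and decrements the countdown part by positive amounts without ever being able to raise it, so the positive countdown residue $c' \cdot 4^{h+n}$ can only stay positive or the counter goes negative. Hence the opponent can always respond so that the counter is never $0$ after a reacher move, and wins; this is the case analysis already carried out in the last two bullets of the proof of Proposition~$16$, reused verbatim. The main obstacle I expect is the bookkeeping in this second half: one must enumerate every reacher move type available after \finish\ and verify the opponent's response keeps the invariant ``counter positive or the reacher has already lost,'' being careful that carries cannot create a spurious zero — but Lemma~$17$ and the structure of the codes (every digit change is by $\pm 1$ on the small parts, the countdown part only decreases) make each sub-case a short check rather than a genuine difficulty.
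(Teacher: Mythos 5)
Your first half — tracking the counter along $\pi'$ by the cancellation/invariant argument and checking that \finish{} zeroes everything except the countdown part, so the reacher wins exactly when the simulated play ends in $(\bot,0)$ — is essentially the paper's argument and is fine. The problem is in the second half. Your key claim, that after an unsuccessful \finish{} ``the positive countdown residue $c'\cdot 4^{h+n}$ can only stay positive or the counter goes negative,'' is false: each subsequent \state/\codechoose{} move subtracts a positive multiple of $4^{h+n}$ from the countdown part, so that part can land exactly on $0$, and if in the same round the low digits also cancel (opponent plays \duration{} $d_j$ \goto{} $s'$, reacher answers \state{} $s'$ \codechoose{} $d_j$ with $d_j=c'$ — possible when the countdown game has a suitable self-loop) the whole counter is $0$ and the reacher wins. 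So the opponent genuinely needs a strategy here, and ``the opponent can always respond so that the counter is never $0$'' is asserted rather than proved; the last two bullets of the bad-encoding proposition that you invoke cover deviating \cancel{} moves, not the post-\finish{} situation where only \state/\codechoose{} moves remain available to the reacher.

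The paper closes this gap with a concrete device you are missing: after \finish{} the opponent always plays a move with \goto{} $\bot$. Since $\bot$ has no outgoing transitions in a restricted countdown game, there is no reacher move \state{} $\bot$ \codechoose{} $d$, so the $1$ planted in the digit for $\bot$ can never again be removed by a \state/\codechoose{} move, and every other move type (a second \finish, \cancel/\erase, \cancel/\remove) drives the counter negative because the control part is already zero. Hence the counter is never $0$ after a reacher move. You should also dispose of the case where the reacher has over-subtracted so that $c'<0$ (then the counter is already negative after \finish{} and the opponent wins outright); your argument silently assumes $c'>0$. With the \goto{} $\bot$ strategy added and that case noted, your proof would match the paper's.
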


\begin{proof}
We look at the evolution of the counter value along a good encoding.
Every \duration\ $d$ cancels the previous \codechoose\ $d$,
every \state\ $s$ cancels the previous \goto\ $s$, \state\ $s_0$ cancels the $1$ in the initial counter value,
and \finish\ cancels the digit that correspond to $\bot$ and the fourth part.
In other words, all parts except possibly the third one are zero at the end of a good encoding.
Here, we do not consider possible carries from the third to the fourth part, which make the reacher lose.
As for the third part, it first represents the initial value in the countdown game and the reacher subtracts from it
the values of the durations chosen by player~$1$ in the simulated play.
The counter value is also $0$ at the end of the good encoding in the robot game if, and only if,
the corresponding play in the countdown game ends in the configuration $(\bot,0)$.

Let us show why we need the \finish\ move. According to Lemma~$18$,
the reacher cannot win if he plays \state/\codechoose\ moves forever,
even if he tries to make a carry appear from the third to the fourth part of the counter value,
because there will always be a digit at~$1$ in the first part of the counter.
However, with a \finish\ move, no digit is incremented in the first part of the counter.
Hence, the reacher should use this move at least once at the end of a good encoding.

Note that, in particular, the reacher loses if he subtracts to the third part more than the initial value in the countdown game.

Now, we present the winning strategy for the opponent if the reacher did not win at the moment where he played \finish.
The fourth part of the counter is now nullified,
hence the reacher can afterwards only use \state/\codechoose\ moves because other moves would make the counter value negative.
Consequently, the opponent can do a move with \goto\ $\bot$
and guarantee at the next step that the digit that corresponds to $\bot$ is never $0$ again.
\end{proof}

We conclude from Propositions~$17$ and $19$ that the reacher has a winning strategy in the robot game if, 
and only if, player~$1$ has a winning strategy in the countdown game.
Indeed, both players need to generate a good encoding, else they know that they will lose,
thus, it is just a matter of checking whether the reacher can enforce the good encoding to make him win.

\section{Conclusion and perspectives}

In this paper, we give an EXPTIME algorithm for solving robot games on the integer line,
and prove EXPTIME-hardness by a reduction from countdown games.
According to \cite{DR13}, it is open whether deciding the winner of robot games is decidable in dimension two.
It will be interesting to see if the game on a grid has enough regularity properties so as to adapt our algorithm for the dimension two problem.

\bibliographystyle{eptcs}
\bibliography{robot_games_full}

\end{document}